\newcommand{\arXiv}[2]{arXiv:\href{http://arxiv.org/abs/#1}{#1 #2}}
\newtheorem{theorem}{Theorem}
\newtheorem{proposition}[theorem]{Proposition}
\newtheorem{lemma}[theorem]{Lemma}
\theoremstyle{remark}
\newtheorem*{note}{Note}
\numberwithin{equation}{section}
\begin{document}

\title[Eigenfunctions of an elliptic integrable particle model]
{Eigenfunctions of a discrete elliptic integrable particle model with hyperoctahedral symmetry}

\author{Jan Felipe van Diejen}

\address{Instituto de Matem\'aticas, Universidad de Talca, Casilla 747, Talca, Chile}

\email{diejen@inst-mat.utalca.cl}

\author{Tam\'as G\"orbe}

\address{School of Mathematics, University of Leeds, Leeds LS2 9JT, UK}

\email{T.Gorbe@leeds.ac.uk}

\subjclass[2010]{Primary: 33E30; Secondary: 42C30, 81Q35, 81Q80}
\keywords{integrable quantum mechanics on a lattice, eigenfunctions, discrete Heun functions with hyperoctahedral symmetry, Koornwinder-Macdonald polynomials.}

\date{July 2021}

\begin{abstract}
We construct the orthogonal eigenbasis for a discrete elliptic Ruijsenaars type quantum particle Hamiltonian with
hyperoctahedral symmetry.
In the trigonometric limit the eigenfunctions in question recover a previously studied $q$-Racah type reduction of the Koornwinder-Macdonald polynomials.
When the inter-particle interaction degenerates to that of impenetrable bosons, the orthogonal
eigenbasis simplifies in terms of generalized Schur polynomials on the spectrum associated with recently found
elliptic Racah polynomials.
\end{abstract}

\maketitle

\section{Introduction}
\label{sec:1}
The aim of the present work is to construct the eigenbasis of a second-order difference operator that was introduced in \cite{die:integrability}
as the quantum Hamiltonian for an elliptic Ruijsenaars type particle model on the circle with hyperoctahedral symmetry.
The difference operator in question is known to be integrable  \cite{cha:quantum,kom-hik:quantum,kom-hik:conserved,rai:elliptic}
and was observed to enjoy remarkable reflection-group symmetries in the parameter space \cite{rui:integrable,rui:hilbert,rui:hilbert4}.
Even though special solutions of the corresponding eigenvalue equation can be found in
\cite{ata:source,rui:integrable,rui:hilbert,sar-spi:complex,spi:elliptic}, to date the implementation of this particle model in terms of a spectral problem for
a self-adjoint operator in an appropriate Hilbert space  is available only in the case of a single particle \cite{rui:hilbert4}. Following a  strategy developed recently for the elliptic Ruijsenaars model
in \cite{die-gor:ruijsenaars}, we will truncate the difference operator from \cite{die:integrability} onto a finite lattice over the configuration space consisting of points  labeled by bounded partitions. This entails a self-adjoint spectral problem in a finite-dimensional Hilbert space of lattice functions. Our main focus lies on the construction of the corresponding orthogonal eigenbasis. 

In full generality the coefficients of the pertinent difference operator are built from products of theta functions, while  in the trigonometric limit 
the quantum Hamiltonian degenerates to the
Koornwinder-Macdonald multivariate generalization of the Askey-Wilson operator \cite{koo:askey}. In the latter situation, explicit formulas for a complete system of commuting quantum integrals can be extracted from  \cite{die:integrability,die:difference}. The corresponding trigonometric degeneration of our orthogonal eigenbasis below turns out to be given by a discrete $q$-Racah type reduction of the Koornwinder-Macdonald polynomials studied in \cite{die-sto:multivariable}. From this perspective, the eigenbasis constructed in the present work gives rise to a multivariate generalization of the recently found elliptic Racah basis from \cite{die-gor:racah}.

Let us now outline the organization of the main contents in more detail.  In Section \ref{sec:2} we define the action of the $n$-particle quantum Hamiltonian from \cite{die:integrability} in the space of complex-valued functions supported on partitions of length $\leq n$. The resulting difference operator has  coefficients built from Jacobi theta functions and depends---apart from the real period and the imaginary period---on nine coupling parameters: one parameter governing the interaction between the particles and eight parameters governing an external field. Upon implementing a truncation condition that scales the real period in terms of a positive integral level $m$ governing the lattice size, the difference operator is restricted to a finite-dimensional space of functions supported on bounded partitions that fit inside a rectangle of $n$ parts of size $m$. In Section \ref{sec:3} this function space is endowed with an inner product promoting it into a finite-dimensional Hilbert space in which the truncated difference operator is shown to be self-adjoint. In Section \ref{sec:4} we verify that (for generic values of the coupling parameters) the corresponding eigenvalue problem in the Hilbert space exhibits simple spectrum thus giving rise  to a unique (up to normalization) orthogonal eigenbasis for the Hilbert space. In the trigonometric limit the eigenfunctions become multivariable $q$-Racah polynomials that arose previously as truncated Koornwinder-Macdonald polynomials \cite{die-sto:multivariable}. 
In Section \ref{sec:5} we finish by deriving explicit polynomial expressions and norm formulas for the eigenfunctions at the elliptic level in two special 
instances: for the $m=1$ case of one-column bounded partitions, and for a particular value of the inter-particle coupling parameter at
which the particles behave as impenetrable bosons in an external field of elliptic Racah type.

\begin{note}
In what follows elliptic functions are written in terms of rescaled variants of the Jacobi theta functions (cf. e.g. 
\cite[Chapter I]{law:elliptic},
\cite[Chapter 20]{olv-loz-boi-cla:nist}, or \cite[Chapter 21]{whit-wats:coma}). Namely, we employ the following four functions
\begin{subequations}
\begin{equation}\label{thetas}
[z]_1=[z;p]_1=\tfrac{\vartheta_1(\frac{\alpha}{2}z;p)}{\sin(\frac{\alpha}{2})\vartheta'_1(0;p)},\quad
[z]_r=[z;p]_r=\tfrac{\vartheta_r(\frac{\alpha}{2}z;p)}{\vartheta_r(0;p)},\ r=2,3,4
\end{equation}
for $z\in\mathbb{C}$, $0<\alpha<2\pi$, $0<p<1$ with
\begin{equation}\label{theta1}
\begin{aligned}
\vartheta_1(z;p)&=2\sum_{l\geq 0}(-1)^lp^{(l+\frac{1}{2})^2}\sin(2l+1)z\\
&=2p^{1/4}\sin(z)\prod_{l\geq 1}(1-p^{2l})(1-2p^{2l}\cos(2z)+p^{4l}),
\end{aligned}
\end{equation}
\begin{equation}\label{theta2}
\begin{aligned}
\vartheta_2(z;p)&=2\sum_{l\geq 0}p^{(l+\frac{1}{2})^2}\cos(2l+1)z\\
&=2p^{1/4}\cos(z)\prod_{l\geq 1}(1-p^{2l})(1+2p^{2l}\cos(2z)+p^{4l}),
\end{aligned}
\end{equation}
\begin{equation}\label{theta3}
\begin{aligned}
\vartheta_3(z;p)&=1+2\sum_{l\geq 0}p^{l^2}\cos(2lz)\\
&=\prod_{l\geq 1}(1-p^{2l})(1+2p^{2l-1}\cos(2z)+p^{4l-2}),
\end{aligned}
\end{equation}
\begin{equation}\label{theta4}
\begin{aligned}
\vartheta_4(z;p)&=1+2\sum_{l\geq 0}(-1)^lp^{l^2}\cos(2lz)\\
&=\prod_{l\geq 1}(1-p^{2l})(1-2p^{2l-1}\cos(2z)+p^{4l-2}).
\end{aligned}
\end{equation}
\end{subequations}
\end{note}

\section{Difference operator}
\label{sec:2}

\subsection{Discrete difference operator on partitions}
\label{subsec:2.1}

For $n\in\mathbb{N}$ let us consider
the space of complex-valued functions 
\begin{equation}\label{lattice-functions}
\mathcal{C}(\Lambda^{(n)})=\{\lambda\overset{f}{\to} f_\lambda\in\mathbb{C}\mid\lambda\in\Lambda^{(n)}\}
\end{equation}
on the set of partitions of length at most $n$:
\begin{equation}\label{partitions}
\Lambda^{(n)}=\{\lambda\in\mathbb{Z}^n\mid \lambda_1\geq\dots\geq\lambda_n\geq 0\} .
\end{equation}
We define the following difference operator acting in $\mathcal{C}(\Lambda^{(n)})$:
\begin{subequations}
\begin{equation}\label{Hamiltonian}
(Hf)_\lambda=A_\lambda f_\lambda+\sum_{\substack{1\leq j\leq n,\,\varepsilon=\pm 1\\\lambda+\varepsilon e_j\in\Lambda^{(n)}}}B_{\lambda,\varepsilon j}f_{\lambda+\varepsilon e_j}\qquad(f\in\mathcal{C}(\Lambda^{(n)}),\ \lambda\in\Lambda^{(n)}) 
\end{equation}
with
\begin{equation}\label{A-lambda}
A_\lambda=\sum_{1\leq r\leq 4}\mathrm{c}_r\Bigl(\prod_{1\leq j\leq n}\tfrac{[\rho_j+\lambda_j+\frac{1}{2}-\mathrm{g}]_r}{[\rho_j+\lambda_j+\frac{1}{2}]_r}
\tfrac{[\rho_j+\lambda_j-\frac{1}{2}+\mathrm{g}]_r}{[\rho_j+\lambda_j-\frac{1}{2}]_r}
-1\Bigr),
\end{equation}
\begin{equation}\label{B-lambda}
B_{\lambda,\varepsilon j}=\prod_{1\leq r\leq 4}\tfrac{[\rho_j+\lambda_j+\varepsilon\mathrm{g}_r]_r[\rho_j+\lambda_j+\varepsilon(\mathrm{g}'_r+\frac{1}{2})]_r}{[\rho_j+\lambda_j]_r[\rho_j+\lambda_j+\frac{\varepsilon}{2}]_r}\prod_{\substack{1\leq k\leq n\\k\neq j,\delta=\pm 1}}\tfrac{[\rho_j+\delta\rho_k+\lambda_j+\delta\lambda_k+\varepsilon\mathrm{g}]_1}{[\rho_j+\delta\rho_k+\lambda_j+\delta\lambda_k]_1},
\end{equation}
\begin{equation}\label{C-r}
\mathrm{c}_r=\tfrac{2}{[\mathrm{g}]_1[\mathrm{g}-1]_1}\prod_{1\leq s\leq 4}[\mathrm{g}_{\pi_r(s)}-\tfrac{1}{2}]_s[\mathrm{g}'_{\pi_r(s)}]_s.
\end{equation}
\end{subequations}
In the above formulas $\pi_1=\mathrm{id}$, $\pi_2=(12)(34)$, $\pi_3=(13)(24)$, $\pi_4=(14)(23)$ represent permutations acting on the index set $\{ 1,2,3,4\}$, the vectors $e_1,\dots,e_n$ constitute the standard unit basis for $\mathbb{R}^n$, and  $\rho=(\rho_1,\dots,\rho_n)$ denotes a deformed Weyl vector with  components of the form
\begin{equation}\label{rho}
\rho_j=(n-j)\mathrm{g}+\mathrm{g}_1,\quad j=1,\dots,n.
\end{equation}
The difference operator depends on nine coupling parameters,  $\mathrm{g},\mathrm{g}_1,\dots,\mathrm{g}_4,\mathrm{g}'_1,\dots,\mathrm{g}'_4,$  which for the moment can be assumed to take generic complex values such that the denominators of the coefficients $A_\lambda$ \eqref{A-lambda}, $B_{\lambda,\varepsilon j}$ \eqref{B-lambda} never vanish. This is ensured e.g. if we pick $\mathrm{g},\mathrm{g}_1\in\mathbb{R}$ such that $k\mathrm{g}+l\mathrm{g}_1\notin\frac{1}{2}\mathbb{Z}+\frac{\pi}{2\alpha}\mathbb{Z}$ for $k=0,1,\dots,n-1$ and $l=0,1$ with $(k,l)\neq(0,0)$ (cf. Eqs. \eqref{thetas}--\eqref{theta4}). Notice that if $\mathrm{g}_r^\prime=0$ for $r=1,\ldots ,4$, then the structure of the coefficients
of our difference operator simplifies considerably:
\begin{equation*}
A_\lambda\stackrel{\mathrm{g}_r^\prime=0}{\longrightarrow} 0,\qquad
B_{\lambda,\varepsilon j} \stackrel{\mathrm{g}_r^\prime=0}{\longrightarrow}
\prod_{1\leq r\leq 4}\tfrac{[\rho_j+\lambda_j+\varepsilon\mathrm{g}_r]_r}{[\rho_j+\lambda_j]_r}\prod_{\substack{1\leq k\leq n\\k\neq j,\delta=\pm 1}}\tfrac{[\rho_j+\delta\rho_k+\lambda_j+\delta\lambda_k+\varepsilon\mathrm{g}]_1}{[\rho_j+\delta\rho_k+\lambda_j+\delta\lambda_k]_1} .
\end{equation*}

The operator $H$ \eqref{Hamiltonian}--\eqref{C-r} is a discrete variant of a difference operator introduced in \cite[Eqs. (4.1)--(4.3)]{die:integrability}. To revert to the formulas of \cite{die:integrability} one has to pass from the Jacobi theta functions to the Weierstrass sigma functions associated with the period lattice $\Omega=2\omega_1\mathbb{Z}+2\omega_2\mathbb{Z}$ (cf. e.g. \cite[Chapter 6.2]{law:elliptic}):
\begin{equation}\label{sigma-theta-connection}
[z]_r=\sigma_{r-1}(z)e^{-\frac{\alpha\eta_1}{2\pi}z^2},\quad r=1,\dots,4,
\end{equation}
where $\omega_1=\frac{\pi}{\alpha}$, $p=e^{\mathrm{i}\pi\tau}$ with $\tau=\frac{\omega_3}{\omega_1}$, $\omega_3=-\omega_1-\omega_2$ and
\begin{equation*}
\sigma_0(z)=\sigma(z),\quad\sigma_s(z)=e^{-\eta_sz}\frac{\sigma(z+\omega_s)}{\sigma(\omega_s)}\quad\text{with}\ \eta_s=\zeta(\omega_s),\quad s=1,2,3.
\end{equation*}
Here $\sigma(z)$ and $\zeta(z)=\frac{\sigma'(z)}{\sigma(z)}$ are Weierstrass' sigma and zeta functions, respectively. Specifically, upon conjugation with a Gaussian and multiplication by an overall constant the relation in Eq. \eqref{sigma-theta-connection} can be applied to turn the difference operator $H$ into an operator $\hat{H}$ of the same form, but with the rescaled theta functions $[z]_r$ being replaced by the sigma functions $\sigma_{r-1}(z)$ ($r=1,\dots,4$):
\begin{equation*}
H\to\hat{H}=e^{b-a}GHG^{-1}\quad\Longleftrightarrow\quad [z]_r\to\sigma_{r-1}(z),
\end{equation*}
where $G\in\mathcal{C}(\Lambda^{(n)})$ is defined by
\begin{equation}\label{Gaussian}
G_\lambda=\prod_{1\leq j\leq n}e^{-a(\rho_j+\lambda_j)^2}\qquad (\lambda\in\Lambda^{(n)})
\end{equation}
and $a=\frac{\alpha\eta_1}{2\pi}(2\mathrm{g}(n-1)+\sum_{1\leq r\leq 4}\mathrm{g}_r+\mathrm{g}'_r),b=\frac{\alpha\eta_1}{2\pi}(2\mathrm{g}^2(n-1)+\sum_{1\leq r\leq 4}\mathrm{g}_r^2+\mathrm{g}_r'^2+\mathrm{g}'_r)$. 
Up to the overall constant with value  $-\sum_{1\leq r\leq 4} \mathrm{c}_r$,
the operator $\hat{H}$ is retrieved from \cite[Eqs. (4.1)--(4.3)]{die:integrability} via
the
substitution $x=\rho+\lambda$, $\beta\hbar=-\mathrm{i}$, $\gamma=\frac{1}{2}$, $\mu=\mathrm{g}$, $\mu_{r-1}=\mathrm{g}_r$, $\mu'_{r-1}=\mathrm{g}'_r$ for $r=1,\dots,4$.

\begin{note}
The difference operator in \cite[Eqs. (4.1)--(4.3)]{die:integrability} exhibits hyperoctahedral symmetry in the variables $x=(x_1,\dots,x_n)$, i.e. its action preserves the space of functions that are invariant with respect to the natural action of the hyperoctahedral group of signed permutations.
 For $\mathrm{g},\mathrm{g}_1>0$, the substitution $x=\rho+\lambda$ amounts to a discretization of the configuration space $\{x\in\mathbb{R}^n\mid x_1>\dots>x_n>0\}$ on a uniform lattice $\Lambda^{(n)}$ of partitions shifted by $\rho$.
\end{note}

\subsection{Finite-dimensional truncation on bounded partitions}
\label{subsec:2.2}

From now onward, we employ real coupling parameters within the domain
\begin{subequations}
\begin{equation}\label{coupling-conditions}
\boxed{\mathrm{g}>0,\quad \mathrm{g}_r>0,\quad |\mathrm{g}'_r|<\mathrm{g}_r+\tfrac{1}{2}\ (r=1,2)\quad\text{and}\quad\mathrm{g}_r,\mathrm{g}'_r\in\mathbb{R}\ (r=3,4),}
\end{equation}
subject to the \emph{truncation condition}
\begin{equation}\label{truncation}
\boxed{\alpha=\frac{\pi}{m+(n-1)\mathrm{g}+\mathrm{g}_1+\mathrm{g}_2}\quad\text{with}\ m\in\mathbb{N}.}
\end{equation}
\end{subequations}
Unless $\mathrm{g}_r^\prime=0$ for $r=1,\dots,4$, we will exclude the value
$\mathrm{g}= 1$ so as to avoid a pole in the coefficients $\mathrm{c}_r$ \eqref{C-r}. The pole in question turns out to be regularizable as is detailed further in Subsection \ref{subsec:5.2}.

We will now check that the truncation condition permits us to restrict the action of $H$ \eqref{Hamiltonian}--\eqref{C-r} to the subspace of complex lattice functions
supported in the set of  partitions from $\Lambda^{(n)}$ with parts bounded by $m$:
\begin{equation}\label{bounded-partitions}
\Lambda^{(n,m)}=\{\lambda\in\mathbb{Z}^n\mid m\geq \lambda_1\geq\dots\geq\lambda_n\geq 0\}.
\end{equation}
Notice that the space in question is finite-dimensional as the number of these bounded partitions is given by  $|\Lambda^{(n,m)}|=\binom{n+m}{n}$.

\begin{lemma}[Truncation]\label{lemma:truncation}
Let us pick $\lambda\in\Lambda^{(n,m)}$, $j\in\{1,\dots,n\}$ and $\varepsilon\in\{\pm 1\}$ arbitrarily. Then---for parameters satisfying the conditions in Eqs. \eqref{coupling-conditions}, \eqref{truncation}---the coefficient $B_{\lambda,\varepsilon j}$ \eqref{B-lambda} is \emph{positive} iff $\lambda+\varepsilon e_j\in\Lambda^{(n,m)}$ and \emph{vanishes} iff $\lambda+\varepsilon e_j\notin\Lambda^{(n,m)}$.
\end{lemma}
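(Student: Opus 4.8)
The plan is to read off the sign of every individual factor of $B_{\lambda,\varepsilon j}$ \eqref{B-lambda} directly from the infinite products in \eqref{theta1}--\eqref{theta4}, using the truncation \eqref{truncation} to pin down where the zeros sit. Setting $L:=\frac{\pi}{\alpha}=m+(n-1)\mathrm{g}+\mathrm{g}_1+\mathrm{g}_2$, I would first observe that for real argument each of the products $\prod_{l\ge1}(1\mp2p^{2l-1}\cos(2w)+p^{4l-2})$ and $\prod_{l\ge1}(1\mp2p^{2l}\cos(2w)+p^{4l})$ is strictly positive, whence $[z]_3,[z]_4>0$ for all real $z$, while $\mathrm{sign}[z]_1=\mathrm{sign}\sin(\tfrac{\alpha}{2}z)$ and $\mathrm{sign}[z]_2=\mathrm{sign}\cos(\tfrac{\alpha}{2}z)$. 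Consequently $[z]_1>0$ on $(0,2L)$ with zeros exactly on $2L\mathbb{Z}$, and $[z]_2>0$ on $(-L,L)$ with zeros exactly on $L(2\mathbb{Z}+1)$. Thus only the $r=1,2$ single-particle factors and the pair factors can vanish or flip sign.

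Next I would record the ranges of the arguments. Because $\rho_j+\lambda_j=(n-j)\mathrm{g}+\mathrm{g}_1+\lambda_j$ \eqref{rho} is strictly decreasing in $j$ and $0\le\lambda_j\le m$, one has $\mathrm{g}_1\le\rho_j+\lambda_j\le L-\mathrm{g}_2$, so $\rho_j+\lambda_j\in(0,L)$; the two largest of these sum to at most $2L-2\mathrm{g}_2-\mathrm{g}$, so $\rho_j+\rho_k+\lambda_j+\lambda_k$ together with its $\pm\mathrm{g}$-shift stays in $(0,2L)$, while $\rho_j-\rho_k+\lambda_j-\lambda_k=(k-j)\mathrm{g}+\lambda_j-\lambda_k$ lies in $(-L,L)$, is nonzero for $j\neq k$, and has the sign of $k-j$. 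Combined with the coupling bounds $|\mathrm{g}'_r|<\mathrm{g}_r+\tfrac12$ from \eqref{coupling-conditions} and the standing genericity that keeps all denominators nonzero, this shows that every denominator, every $r=3,4$ numerator factor, and every $\delta=+1$ pair factor is strictly positive, independently of whether the move is allowed.

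The core is then a case analysis on $(\varepsilon,j)$. For a valid move $\lambda+\varepsilon e_j\in\Lambda^{(n,m)}$ \eqref{bounded-partitions} one has $\lambda_j\le m-1$ if $\varepsilon=+1$ and $\lambda_j\ge1$ if $\varepsilon=-1$; substituting these bounds into $\rho_j+\lambda_j$ keeps $\rho_j+\lambda_j+\varepsilon\mathrm{g}_r$ and $\rho_j+\lambda_j+\varepsilon(\mathrm{g}'_r+\tfrac12)$ inside $(0,2L)$ for $r=1$ and inside $(-L,L)$ for $r=2$ (here the bound $|\mathrm{g}'_r|<\mathrm{g}_r+\tfrac12$ is used), so these numerators are positive; for the $\delta=-1$ pair factors I would check, splitting into $k<j$ and $k>j$, that the numerator argument $(k-j+\varepsilon)\mathrm{g}+\lambda_j-\lambda_k$ carries the same sign as the denominator argument $(k-j)\mathrm{g}+\lambda_j-\lambda_k$, the strict partition inequality of the valid move excluding equality, so each ratio is positive and hence $B_{\lambda,\varepsilon j}>0$. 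For an invalid move I would exhibit the unique vanishing numerator: $\rho_1+\lambda_1+\mathrm{g}_2=L$ forces $[\,\cdot\,]_2=0$ when $\lambda_1=m,\varepsilon=+1$; $\rho_n+\lambda_n-\mathrm{g}_1=0$ forces $[\,\cdot\,]_1=0$ when $\lambda_n=0,\varepsilon=-1$; and the $\delta=-1$ pair numerator at $k=j-\varepsilon$ equals $[0]_1=0$ precisely when $\lambda_{j-1}=\lambda_j$ ($\varepsilon=+1$) or $\lambda_j=\lambda_{j+1}$ ($\varepsilon=-1$). Since the denominators stay nonzero, $B_{\lambda,\varepsilon j}=0$ in exactly these cases.

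I expect the main obstacle to be the sign bookkeeping for the $\delta=-1$ pair factors: one must verify that within a single such factor numerator and denominator never straddle a zero of $[\,\cdot\,]_1$, and that the strict inequalities $\lambda_{j-1}>\lambda_j$ resp. $\lambda_j>\lambda_{j+1}$ encoded in move-validity are exactly what rules out the coincidence $(k-j+\varepsilon)\mathrm{g}+\lambda_j-\lambda_k=0$ at the adjacent index $k=j-\varepsilon$ — thereby confirming that the four boundary mechanisms above account for all, and only, the forbidden moves.
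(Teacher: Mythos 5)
Your proposal follows essentially the same route as the paper: reduce the sign of each factor of $B_{\lambda,\varepsilon j}$ to the sign of $\sin(\tfrac{\alpha}{2}z)$ or $\cos(\tfrac{\alpha}{2}z)$ via the product expansions \eqref{theta1}--\eqref{theta4}, bound the arguments using the truncation condition, and identify the vanishing numerators with the forbidden moves. The case analysis for the numerators and for the $\delta=-1$ pair factors is sound. One intermediate claim is overstated, however: it is not true that every single-particle denominator is positive ``independently of whether the move is allowed.'' Since \eqref{coupling-conditions} only requires $\mathrm{g}_1,\mathrm{g}_2>0$, the factor $[\rho_n+\lambda_n-\tfrac12]_1$ is negative when $\varepsilon=-1$, $\lambda_n=0$ and $\mathrm{g}_1<\tfrac12$, and likewise $[\rho_1+\lambda_1+\tfrac12]_2$ is negative when $\varepsilon=+1$, $\lambda_1=m$ and $\mathrm{g}_2<\tfrac12$ (since then $\rho_1+m+\tfrac12>\tfrac{\pi}{\alpha}$). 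These configurations are exactly invalid moves, where a numerator vanishes and hence $B_{\lambda,\varepsilon j}=0$ anyway, so your conclusion survives; but the positivity of the denominators must be asserted only under the valid-move hypotheses $\lambda_j\le m-1$ resp.\ $\lambda_j\ge 1$ (which is how the paper phrases it, conditioning on $\lambda_1<m$ and $\lambda_n>0$ at the relevant steps).
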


\begin{proof}
The sign of $B_{\lambda,\varepsilon j}$ \eqref{B-lambda} can be calculated by looking at the argument of each of its factors and employing the product expansions of the theta functions \eqref{theta1}--\eqref{theta4}. We treat the cases $\varepsilon=\pm 1$ separately.

The sign of $B_{\lambda,j}$ is equal to the product of the signs of the following trigonometric expressions
\begin{equation*}
\prod_{\substack{1\leq k\leq n\\k\neq j}}\tfrac{\sin\frac{\alpha}{2}(\rho_j+\rho_k+\lambda_j+\lambda_k+\mathrm{g})}{\sin\frac{\alpha}{2}(\rho_j+\rho_k+\lambda_j+\lambda_k)}\tfrac{\sin\frac{\alpha}{2}(\rho_j-\rho_k+\lambda_j-\lambda_k+\mathrm{g})}{\sin\frac{\alpha}{2}(\rho_j-\rho_k+\lambda_j-\lambda_k)}
\end{equation*}
and
\begin{equation*}
\tfrac{\sin\frac{\alpha}{2}(\rho_j+\lambda_j+\mathrm{g}_1)}{\sin\frac{\alpha}{2}(\rho_j+\lambda_j)}\tfrac{\sin\frac{\alpha}{2}(\rho_j+\lambda_j+\mathrm{g}'_1+\frac{1}{2})}{\sin\frac{\alpha}{2}(\rho_j+\lambda_j+\frac{1}{2})}\tfrac{\cos\frac{\alpha}{2}(\rho_j+\lambda_j+\mathrm{g}_2)}{\cos\frac{\alpha}{2}(\rho_j+\lambda_j)}\tfrac{\cos\frac{\alpha}{2}(\rho_j+\lambda_j+\mathrm{g}'_2+\frac{1}{2})}{\cos\frac{\alpha}{2}(\rho_j+\lambda_j+\frac{1}{2})}.
\end{equation*}
We analyze each type of fraction one by one in the above formulas. Note that due to the inequalities
\begin{equation*}
0<\mathrm{g}<\rho_j+\rho_k+\lambda_j+\lambda_k<\rho_j+\rho_k+\lambda_j+\lambda_k+\mathrm{g}\leq\rho_1+\rho_2+2m+\mathrm{g}=\tfrac{2\pi}{\alpha}-2\mathrm{g}_2<\tfrac{2\pi}{\alpha},
\end{equation*}
which hold for all indices $1\leq j\leq n$, $1\leq k\leq n$ with $j\neq k$, we have
\begin{equation*}
0<\tfrac{\sin\frac{\alpha}{2}(\rho_j+\rho_k+\lambda_j+\lambda_k+\mathrm{g})}{\sin\frac{\alpha}{2}(\rho_j+\rho_k+\lambda_j+\lambda_k)}<\infty.
\end{equation*}
Similarly, thanks to
\begin{gather*}
0\leq|\rho_j-\rho_k+\lambda_j-\lambda_k+\mathrm{g}|\leq n\mathrm{g}+m=\tfrac{\pi}{\alpha}+\mathrm{g}-(\mathrm{g}_1+\mathrm{g}_2)<\tfrac{2\pi}{\alpha},\\
0<\mathrm{g}\leq|\rho_j-\rho_k+\lambda_j-\lambda_k|\leq (n-1)\mathrm{g}+m=\tfrac{\pi}{\alpha}-(\mathrm{g}_1+\mathrm{g}_2)<\tfrac{\pi}{\alpha},
\end{gather*}
which hold for all indices $1\leq j\leq n$, $1\leq k\leq n$ with $j\neq k$, we have that
\begin{equation*}
0\leq\tfrac{\sin\frac{\alpha}{2}(\rho_j-\rho_k+\lambda_j-\lambda_k+\mathrm{g})}{\sin\frac{\alpha}{2}(\rho_j-\rho_k+\lambda_j-\lambda_k)}<\infty .
\end{equation*}
The factor in question vanishes iff $j>1$, $k=j-1$ and $\lambda$ is such that $\lambda_j=\lambda_{j-1}$. Next, the inequalities
\begin{equation*}
0<\mathrm{g}_1=\rho_n\leq\rho_j+\lambda_j<\rho_j+\lambda_j+\mathrm{g}_1\leq \rho_1+m+\mathrm{g}_1=\tfrac{\pi}{\alpha}+(\mathrm{g}_1-\mathrm{g}_2)<\tfrac{\pi}{\alpha}+(\mathrm{g}_1+\mathrm{g}_2)<\tfrac{2\pi}{\alpha}
\end{equation*}
hold for all indices $j$ and imply that
\begin{equation*}
0<\tfrac{\sin\frac{\alpha}{2}(\rho_j+\lambda_j+\mathrm{g}_1)}{\sin\frac{\alpha}{2}(\rho_j+\lambda_j)}<\infty.
\end{equation*}
Likewise, from the inequalities
\begin{gather*}
0\leq\rho_j+\lambda_j-\mathrm{g}_1<\rho_j+\lambda_j+\mathrm{g}'_1+\tfrac{1}{2}<\rho_j+\lambda_j+\mathrm{g}_1+1\leq\rho_1+m+\mathrm{g}_1+1<\tfrac{2\pi}{\alpha},\\
0<\tfrac{1}{2}<\rho_j+\lambda_j+\tfrac{1}{2}\leq\rho_1+m+\tfrac{1}{2}<\tfrac{2\pi}{\alpha}
\end{gather*}
(where in the first and last inequalities we used that $\rho_n=\mathrm{g}_1$ and $\tfrac{1}{2}<m<\tfrac{\pi}{\alpha}$, respectively)
it is seen that
\begin{equation*}
0<\tfrac{\sin\frac{\alpha}{2}(\rho_j+\lambda_j+\mathrm{g}'_1+\frac{1}{2})}{\sin\frac{\alpha}{2}(\rho_j+\lambda_j+\frac{1}{2})}<\infty.
\end{equation*}
Now we turn to the fractions involving cosines. First, we note that because
\begin{gather*}
0<\rho_j+\lambda_j+\mathrm{g}_2\leq\rho_1+m+\mathrm{g}_2=\tfrac{\pi}{\alpha},\\
0<\mathrm{g}_1\leq\rho_j+\lambda_j\leq \rho_1+m=\tfrac{\pi}{\alpha}-\mathrm{g}_2<\tfrac{\pi}{\alpha} ,
\end{gather*}
it is clear that
\begin{equation*}
0\leq\tfrac{\cos\frac{\alpha}{2}(\rho_j+\lambda_j+\mathrm{g}_2)}{\cos\frac{\alpha}{2}(\rho_j+\lambda_j)}<\infty ,
\end{equation*}
where the zero lower bound is reached iff $j=1$ and $\lambda$ is such that $\lambda_1=m$. Finally, if $\lambda_1<m$ one has the inequalities
\begin{gather*}
-\tfrac{\pi}{\alpha}<\rho_j+\lambda_j-\mathrm{g}_2<\rho_j+\lambda_j+\mathrm{g}'_2+\tfrac{1}{2}<\rho_j+\lambda_j+\mathrm{g}_2+1\leq\rho_1+m+\mathrm{g}_2=\tfrac{\pi}{\alpha},\\
0<\tfrac{1}{2}<\rho_j+\lambda_j+\tfrac{1}{2}\leq\rho_1+m<\rho_1+m+\mathrm{g}_2=\tfrac{\pi}{\alpha} ,
\end{gather*}
which imply that
\begin{equation*}
0<\tfrac{\cos\frac{\alpha}{2}(\rho_j+\lambda_j+\mathrm{g}_2'+\frac{1}{2})}{\cos\frac{\alpha}{2}(\rho_j+\lambda_j+\frac{1}{2})}<\infty.
\end{equation*}
The upshot is that $0\leq B_{\lambda,j}<\infty$ with the vanishing occurring iff $\lambda_j=m$ or $j>1$ and $\lambda_j=\lambda_{j-1}$, i.e. iff $\lambda+e_j\notin\Lambda^{(n,m)}$.

The sign of $B_{\lambda,-j}$ equals the product of the signs of the following trigonometric expressions
\begin{equation*}
\prod_{\substack{1\leq k\leq n\\k\neq j}}\tfrac{\sin\frac{\alpha}{2}(\rho_j+\rho_k+\lambda_j+\lambda_k-\mathrm{g})}{\sin\frac{\alpha}{2}(\rho_j+\rho_k+\lambda_j+\lambda_k)}\tfrac{\sin\frac{\alpha}{2}(\rho_j-\rho_k+\lambda_j-\lambda_k-\mathrm{g})}{\sin\frac{\alpha}{2}(\rho_j-\rho_k+\lambda_j-\lambda_k)}
\end{equation*}
and
\begin{equation*}
\tfrac{\sin\frac{\alpha}{2}(\rho_j+\lambda_j-\mathrm{g}_1)}{\sin\frac{\alpha}{2}(\rho_j+\lambda_j)}\tfrac{\sin\frac{\alpha}{2}(\rho_j+\lambda_j-\mathrm{g}'_1-\frac{1}{2})}{\sin\frac{\alpha}{2}(\rho_j+\lambda_j-\frac{1}{2})}\tfrac{\cos\frac{\alpha}{2}(\rho_j+\lambda_j-\mathrm{g}_2)}{\cos\frac{\alpha}{2}(\rho_j+\lambda_j)}\tfrac{\cos\frac{\alpha}{2}(\rho_j+\lambda_j-\mathrm{g}'_2-\frac{1}{2})}{\cos\frac{\alpha}{2}(\rho_j+\lambda_j-\frac{1}{2})}.
\end{equation*}
Again we inspect the fractions individually. Due to the inequalities
\begin{equation*}
0<\mathrm{g}_1\leq\rho_j+\rho_k+\lambda_j+\lambda_k-\mathrm{g}<\rho_j+\rho_k+\lambda_j+\lambda_k\leq\rho_1+\rho_2+2m<\tfrac{2\pi}{\alpha},
\end{equation*}
which hold for all indices $1\leq j\leq n$, $1\leq k\leq n$ with $j\neq k$, we have that
\begin{equation*}
0<\tfrac{\sin\frac{\alpha}{2}(\rho_j+\rho_k+\lambda_j+\lambda_k-\mathrm{g})}{\sin\frac{\alpha}{2}(\rho_j+\rho_k+\lambda_j+\lambda_k)}<\infty.
\end{equation*}
Similarly, from the inequalities
\begin{gather*}
0\leq|\rho_j-\rho_k+\lambda_j-\lambda_k-\mathrm{g}|\leq n\mathrm{g}+m=\tfrac{\pi}{\alpha}+\mathrm{g}-(\mathrm{g}_1+\mathrm{g}_2)<\tfrac{2\pi}{\alpha},\\
0<\mathrm{g}\leq|\rho_j-\rho_k+\lambda_j-\lambda_k|\leq (n-1)\mathrm{g}+m=\tfrac{\pi}{\alpha}-(\mathrm{g}_1+\mathrm{g}_2)<\tfrac{\pi}{\alpha},
\end{gather*}
which hold for all indices $1\leq j\leq n$, $1\leq k\leq n$ with $j\neq k$, we conclude that
\begin{equation*}
0\leq\tfrac{\sin\frac{\alpha}{2}(\rho_j-\rho_k+\lambda_j-\lambda_k-\mathrm{g})}{\sin\frac{\alpha}{2}(\rho_j-\rho_k+\lambda_j-\lambda_k)}<\infty .
\end{equation*}
The vanishing occurs iff $j<n$ and $\lambda$ is such that $\lambda_j=\lambda_{j+1}$. Moving on, we see that
\begin{equation*}
0\leq\lambda_n\leq\rho_j+\lambda_j-\mathrm{g}_1<\rho_j+\lambda_j\leq \rho_1+m=\tfrac{\pi}{\alpha}-\mathrm{g}_2<\tfrac{2\pi}{\alpha}
\end{equation*}
for all indices $j$,  so
\begin{equation*}
0\leq\tfrac{\sin\frac{\alpha}{2}(\rho_j+\lambda_j-\mathrm{g}_1)}{\sin\frac{\alpha}{2}(\rho_j+\lambda_j)}<\infty ,
\end{equation*}
where the zero value is assumed iff $j=n$ and $\lambda$ is such that $\lambda_n=0$. Next, if $\lambda_n>0$ then
\begin{gather*}
0\leq\rho_j+\lambda_j-\mathrm{g}_1-1<\rho_j+\lambda_j-\mathrm{g}'_1-\tfrac{1}{2}<\rho_j+\lambda_j+\mathrm{g}_1<\tfrac{2\pi}{\alpha},\\
0<\rho_n<\rho_j+\lambda_j-\tfrac{1}{2}<\rho_1+m<\tfrac{\pi}{\alpha},
\end{gather*}
which entails that
\begin{equation*}
0<\tfrac{\sin\frac{\alpha}{2}(\rho_j+\lambda_j-\mathrm{g}'_1-\frac{1}{2})}{\sin\frac{\alpha}{2}(\rho_j+\lambda_j-\frac{1}{2})}<\infty.
\end{equation*}
Now we turn to the fractions involving cosines. First, we note that the inequalities
\begin{gather*}
-\tfrac{\pi}{\alpha}<\rho_n-\mathrm{g}_2\leq\rho_j+\lambda_j-\mathrm{g}_2\leq\rho_1+m-\mathrm{g}_2<\tfrac{\pi}{\alpha},\\
0<\mathrm{g}_1\leq\rho_j+\lambda_j\leq \rho_1+m=\tfrac{\pi}{\alpha}-\mathrm{g}_2<\tfrac{\pi}{\alpha}
\end{gather*}
imply that
\begin{equation*}
0<\tfrac{\cos\frac{\alpha}{2}(\rho_j+\lambda_j-\mathrm{g}_2)}{\cos\frac{\alpha}{2}(\rho_j+\lambda_j)}<\infty.
\end{equation*}
Finally, we see from the inequalities
\begin{gather*}
-\tfrac{\pi}{\alpha}<\rho_n-\mathrm{g}_2-1\leq\rho_j+\lambda_j-\mathrm{g}_2-1<\rho_j+\lambda_j-\mathrm{g}'_2-\tfrac{1}{2}<\rho_j+\lambda_j+\mathrm{g}_2\leq\tfrac{\pi}{\alpha},\\
-\tfrac{\pi}{\alpha}<-\tfrac{1}{2}<\rho_j+\lambda_j-\tfrac{1}{2}<\rho_1+m<\tfrac{\pi}{\alpha}
\end{gather*}
that
\begin{equation*}
0<\tfrac{\cos\frac{\alpha}{2}(\rho_j+\lambda_j-\mathrm{g}_2'-\frac{1}{2})}{\cos\frac{\alpha}{2}(\rho_j+\lambda_j-\frac{1}{2})}<\infty.
\end{equation*}
To summarize, we have shown that $0\leq B_{\lambda,-j}<\infty$ with the vanishing occurring iff $\lambda_j=0$ or $j<n$ and $\lambda_j=\lambda_{j+1}$, i.e.  iff $\lambda-e_j\notin\Lambda^{(n,m)}$.
\end{proof}

Lemma \ref{lemma:truncation} ensures that---for parameters subject to the conditions in Eqs. \eqref{coupling-conditions}, \eqref{truncation}---the $\binom{n+m}{n}$-dimensional subspace
\begin{equation}\label{lattice-functions2}
\mathcal{C}(\Lambda^{(n,m)})=\{\lambda\overset{f}{\to} f_\lambda\in\mathbb{C}\mid\lambda\in\Lambda^{(n,m)}\}
\end{equation}
of $\mathcal{C}(\Lambda^{(n)})$ is stable with respect to the action of $H$ \eqref{Hamiltonian}--\eqref{C-r}:
\begin{equation}\label{Hamiltonian2}
(Hf)_\lambda=A_\lambda f_\lambda+\sum_{\substack{1\leq j\leq n,\,\varepsilon=\pm 1\\\lambda+\varepsilon e_j\in\Lambda^{(n,m)}}}B_{\lambda,\varepsilon j}f_{\lambda+\varepsilon e_j}\qquad(f\in\mathcal{C}(\Lambda^{(n,m)}),\ \lambda\in\Lambda^{(n,m)}).
\end{equation}
In addition, we also see with the aid of Lemma \ref{lemma:truncation} that for parameters in the domain given by Eqs. \eqref{coupling-conditions} and \eqref{truncation}  the coefficients $A_\lambda$ \eqref{A-lambda} and $B_{\lambda,\varepsilon j}$ \eqref{B-lambda} in Eq. \eqref{Hamiltonian2} remain regular provided $\mathrm{g}\neq 1$.

\section{Hilbert space}
\label{sec:3}

\subsection{Inner product}
\label{subsec:3.1}

We put \emph{elliptic weights} on the lattice $\Lambda^{(n,m)}$ by introducing the following function $\Delta\in\mathcal{C}(\Lambda^{(n,m)})$:
\begin{equation}\label{weights}
\Delta_\lambda=\prod_{\substack{1\leq j\leq n\\1\leq r\leq 4}}\tfrac{[\rho_j+1,\rho_j+\mathrm{g}_r,\rho_j+\mathrm{g}'_r+\frac{1}{2}]_{r,\lambda_j}}{[\rho_j,\rho_j+1-\mathrm{g}_r,\rho_j-\mathrm{g}'_r+\frac{1}{2}]_{r,\lambda_j}}\prod_{\substack{1\leq j<k\leq n\\\delta=\pm 1}}\tfrac{[\rho_j+\delta\rho_k+\mathrm{g},\rho_j+\delta\rho_k+1]_{1,\lambda_j+\delta\lambda_k}}{[\rho_j+\delta\rho_k,\rho_j+\delta\rho_k+1-\mathrm{g}]_{1,\lambda_j+\delta\lambda_k}},
\end{equation}
where $\lambda\in\Lambda^{(n,m)}$ and $[z_1,\dots,z_N]_{r,l}$ denotes the elliptic shifted factorial
\begin{equation}
\label{elliptic-factorials}
[z_1,\dots,z_N]_{r,l}=\prod_{\substack{1\leq j\leq N\\0\leq k<l}}[z_j+k]_r\quad\text{with}\ [z]_{r,0}=1
\end{equation}
(for $N\in\mathbb{N}$ and $l=0,1,2,\dots$). It is clear from this definition that $[\cdot]_{r,l}$ has the factorial property, i.e.
\begin{equation}\label{factorial-property}
[z]_{r,l+1}=[z]_{r,l}[z+l]_r,\quad l\geq 0.
\end{equation}
With the aid of the factorial property \eqref{factorial-property} and the duplication formula $[2z]_1=2\prod_{1\leq r\leq 4}[z]_r$ (see e.g. \cite[\S20.7(iii)]{olv-loz-boi-cla:nist}), the elliptic weight function $\Delta_\lambda$ \eqref{weights} can be rewritten as follows
\begin{equation}\label{Delta2}
\begin{split}
\Delta_\lambda=&\prod_{1\leq j\leq n}\tfrac{[2\rho_j+2\lambda_j]_1}{[2\rho_j]_1}\prod_{1\leq r\leq 4}\tfrac{[\rho_j+\mathrm{g}_r,\rho_j+\mathrm{g}'_r+\frac{1}{2}]_{r,\lambda_j}}{[\rho_j+1-\mathrm{g}_r,\rho_j-\mathrm{g}'_r+\frac{1}{2}]_{r,\lambda_j}}\\
&\times \prod_{\substack{1\leq j<k\leq n\\\delta=\pm 1}}\tfrac{[\rho_j+\delta\rho_k+\lambda_j+\delta\lambda_k]_1}{[\rho_j+\delta\rho_k]_1}\tfrac{[\rho_j+\delta\rho_k+\mathrm{g}]_{1,\lambda_j+\delta\lambda_k}}{[\rho_j+\delta\rho_k+1-\mathrm{g}]_{1,\lambda_j+\delta\lambda_k}}.
\end{split}
\end{equation}
The following lemma checks that the weight function $\Delta$ \eqref{weights} is \emph{positive}.

\begin{lemma}[Positivity]\label{lemma:positivity}
For parameters subject to the conditions in Eqs. \eqref{coupling-conditions}, \eqref{truncation} and any $\lambda\in\Lambda^{(n,m)}$,
the elliptic weight $\Delta_\lambda$ \eqref{weights} is \emph{positive}.
\end{lemma}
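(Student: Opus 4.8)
The plan is to avoid tracking the sign of each theta factor in \eqref{weights} one by one, and instead to reduce the positivity of $\Delta_\lambda$ to the Truncation Lemma by a telescoping induction along chains of single-box moves. The key is a one-step identity expressing the ratio of neighbouring weights through the coefficients $B_{\lambda,\varepsilon j}$, whose sign has already been settled.

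First I would treat the base case: at the empty partition every elliptic shifted factorial $[\,\cdot\,]_{r,0}$ equals $1$ by \eqref{elliptic-factorials}, so $\Delta_{(0,\dots,0)}=1>0$. Since any $\lambda\in\Lambda^{(n,m)}$ is reached from the empty partition by successively adjoining corner boxes while staying inside $\Lambda^{(n,m)}$, it suffices to show that the ratio $\Delta_{\lambda+e_j}/\Delta_\lambda$ is positive whenever $\lambda,\lambda+e_j\in\Lambda^{(n,m)}$.

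The crux is the detailed-balance identity
\begin{equation*}
\frac{\Delta_{\lambda+e_j}}{\Delta_\lambda}=\frac{B_{\lambda,j}}{B_{\lambda+e_j,-j}}\qquad(\lambda,\lambda+e_j\in\Lambda^{(n,m)}),
\end{equation*}
which I would verify directly from \eqref{weights} using only the factorial property \eqref{factorial-property}. The single-particle block (the factors carrying index $j$ in the first product of \eqref{weights}) telescopes immediately and, after cancelling the common factor $[\rho_j+\lambda_j+\frac12]_r$, reproduces the single-particle part of $B_{\lambda,j}/B_{\lambda+e_j,-j}$. The pair block is more delicate: raising $\lambda_j$ by one lengthens the shifted factorials attached to every pair $(j,k)$ with $k>j$ (both signs $\delta$) and to $(a,j)$ with $a<j$, $\delta=+1$, but it shortens by one the factorial attached to $(a,j)$ with $a<j$, $\delta=-1$. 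In that last case $\lambda_a-\lambda_j\geq 1$ because $\lambda+e_j\in\Lambda^{(n,m)}$ forces $\lambda_{j-1}\geq\lambda_j+1$, so the shorter factorial is well defined; dividing out its top factor and invoking the oddness $[-z]_1=-[z]_1$ of $[\,\cdot\,]_1$ (inherited from $\vartheta_1$) turns this contribution into precisely the $k=a$, $\delta=-1$ factor of the pair part of $B_{\lambda,j}/B_{\lambda+e_j,-j}$.

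Granting the identity, the conclusion is immediate: the Truncation Lemma gives $B_{\lambda,j}>0$ since $\lambda+e_j\in\Lambda^{(n,m)}$, and $B_{\lambda+e_j,-j}>0$ since $\lambda=(\lambda+e_j)-e_j\in\Lambda^{(n,m)}$, whence $\Delta_{\lambda+e_j}/\Delta_\lambda>0$; feeding this into the induction along a box-adding chain yields $\Delta_\lambda>0$ for all $\lambda\in\Lambda^{(n,m)}$. I expect the main obstacle to be the bookkeeping in the pair block of the detailed-balance identity—in particular the length-shortening $a<j$, $\delta=-1$ term and the sign bookkeeping forced by the oddness of $[\,\cdot\,]_1$—whereas the single-particle telescoping and the final appeal to the Truncation Lemma are routine.
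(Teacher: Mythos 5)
Your argument is correct, but it takes a genuinely different route from the paper's. The paper proves positivity head-on: it lists inequalities showing that the argument of every theta factor in $\Delta_\lambda$ \eqref{weights} lies in a range where the corresponding $\vartheta_r$ is positive, in the same style as the proof of Lemma \ref{lemma:truncation}. You instead anchor at $\Delta_0=1$ and propagate positivity along a saturated chain of box additions inside the rectangle, via the detailed-balance identity $B_{\lambda,j}\Delta_\lambda=B_{\lambda+e_j,-j}\Delta_{\lambda+e_j}$ combined with the sign information from Lemma \ref{lemma:truncation}. That identity is precisely the paper's Lemma \ref{lemma:recurrence}, proved immediately afterwards by the same factorial-property bookkeeping you sketch (including the shortening of the $\delta=-1$ factorial for pairs $(a,j)$ with $a<j$ and the pairwise sign cancellations coming from $[-z]_1=-[z]_1$); since that proof is purely algebraic and never invokes positivity, there is no circularity---your proof in effect just reorders Lemmas \ref{lemma:positivity} and \ref{lemma:recurrence}. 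What you gain is economy and a conceptual reading of positivity as detailed balance with positive transition rates; what the paper's direct check supplies in addition is an explicit factor-by-factor verification that $\Delta_\lambda$ is well defined (no vanishing denominators), which in your approach is inherited from the corresponding factor-by-factor analysis inside the proof of Lemma \ref{lemma:truncation}.
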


\begin{proof}
The positivity of $\Delta_\lambda$ follows by checking that all factors on the RHS of Eq. \eqref{weights} remain positive. To see this, let us list the following inequalities for  $1\leq j<k\leq n$ and $0\leq l<\lambda_j+\lambda_k$:
\begin{gather*}
0<\mathrm{g}<\rho_j-\rho_k+l+\mathrm{g}<\rho_j+\rho_k+l+\mathrm{g}<2\rho_1+2m+\mathrm{g}<\tfrac{2\pi}{\alpha},\\
0<1<\rho_j-\rho_k+l+1<\rho_j+\rho_k+l+1\leq 2\rho_1+2m<\tfrac{2\pi}{\alpha},\\
0<\mathrm{g}\leq\rho_j-\rho_k+l<\rho_j+\rho_k+l<2\rho_1+2m<\tfrac{2\pi}{\alpha},\\
0<1\leq\rho_j-\rho_k+l+1-\mathrm{g}<\rho_j+\rho_k+l+1-\mathrm{g}<2\rho_1+2m<\tfrac{2\pi}{\alpha}.
\end{gather*}
It thus follows that
\begin{equation*}
0<\tfrac{[\rho_j+\delta\rho_k+\mathrm{g},\rho_j+\delta\rho_k+1]_{1,\lambda_j+\delta\lambda_k}}{[\rho_j+\delta\rho_k,\rho_j+\delta\rho_k+1-\mathrm{g}]_{1,\lambda_j+\delta\lambda_k}}<\infty
\end{equation*}
for all $1\leq j<k\leq n$ and $\delta=\pm 1$. Similarly, one has that
\begin{gather*}
0<\rho_n\leq\rho_j+l<\rho_j+1+l\leq\rho_1+m<\tfrac{\pi}{\alpha},\\
0\leq\rho_j-\mathrm{g}_1+l<\rho_j\pm\mathrm{g}'_1+\tfrac{1}{2}+l<\rho_j+\mathrm{g}_1+1+l\leq\rho_1+\mathrm{g}_1+m<\tfrac{2\pi}{\alpha},\\
0<1\leq\rho_j+1-\mathrm{g}_1+l\leq\rho_1+m-\mathrm{g}_1<\tfrac{\pi}{\alpha}
\end{gather*}
for $1\leq j\leq n$ and $0\leq l<\lambda_j$, which implies that
\begin{equation*}
0<\tfrac{[\rho_j+1,\rho_j+\mathrm{g}_1,\rho_j+\mathrm{g}'_1+\frac{1}{2}]_{1,\lambda_j}}{[\rho_j,\rho_j+1-\mathrm{g}_1,\rho_j-\mathrm{g}'_1+\frac{1}{2}]_{1,\lambda_j}}<\infty
\end{equation*}
for all $1\leq j\leq n$. Finally, if $1\leq j\leq n$ and $0\leq l<\lambda_j$ then
\begin{gather*}
-\tfrac{\pi}{\alpha}<\rho_n-\mathrm{g}_2\leq\rho_j-\mathrm{g}_2+l<\rho_j\pm\mathrm{g}'_2+\tfrac{1}{2}+l<\rho_j+\mathrm{g}_2+1+l\leq\rho_1+\mathrm{g}_2+m=\tfrac{\pi}{\alpha},\\
-\tfrac{\pi}{\alpha}<\rho_n+1-\mathrm{g}_2\leq\rho_j+1-\mathrm{g}_2+l\leq\rho_1+m-\mathrm{g}_2<\tfrac{\pi}{\alpha},
\end{gather*}
so
\begin{equation*}
0<\tfrac{[\rho_j+1,\rho_j+\mathrm{g}_2,\rho_j+\mathrm{g}'_2+\frac{1}{2}]_{2,\lambda_j}}{[\rho_j,\rho_j+1-\mathrm{g}_2,\rho_j-\mathrm{g}'_2+\frac{1}{2}]_{2,\lambda_j}}<\infty
\end{equation*}
for all $1\leq j\leq n$.
\end{proof}

Lemma \ref{lemma:positivity} confirms that $\Delta$ \eqref{weights} constitutes a discrete weight function  endowing $\mathcal{C}(\Lambda^{(n,m)})$ with an inner product:
\begin{equation}\label{inner-product}
\langle f,g\rangle_\Delta=\sum_{\lambda\in\Lambda^{(n,m)}}f_\lambda \overline{g_\lambda}\Delta_\lambda\qquad(f,g\in\mathcal{C}(\Lambda^{(n,m)})).
\end{equation}
This promotes $\mathcal{C}(\Lambda^{(n,m)})$ to an $\binom{n+m}{n}$-dimensional Hilbert space that will be denoted by $\ell^2(\Lambda^{(n,m)},\Delta)$.

\subsection{Self-adjointness}
\label{subsec:3.2}

The elliptic weights $\Delta_\lambda$ \eqref{weights} satisfy a recurrence relation that involves the coefficients $B_{\lambda,\varepsilon j}$ \eqref{B-lambda}.

\begin{lemma}[Recurrence for Elliptic Weights]\label{lemma:recurrence}
If $\lambda\in\Lambda^{(n,m)}$, $j\in\{1,\dots,n\}$ and $\varepsilon\in\{\pm 1\}$ are such that $\mu=\lambda+\varepsilon e_j\in\Lambda^{(n,m)}$, then the elliptic weights \eqref{weights} satisfy the following recurrence relation
\begin{equation}\label{recurrence}
B_{\lambda,\varepsilon j}\Delta_\lambda=B_{\mu,-\varepsilon j}\Delta_\mu.
\end{equation}
\end{lemma}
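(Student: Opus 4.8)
The plan is to verify the identity by computing both sides as explicit products of the rescaled theta functions $[\cdot]_r$ and checking that they agree factor by factor. First I would observe that the statement is symmetric under interchanging the two neighbouring partitions: writing $\mu=\lambda+\varepsilon e_j$, the assertion $B_{\lambda,\varepsilon j}\Delta_\lambda=B_{\mu,-\varepsilon j}\Delta_\mu$ for the triple $(\lambda,j,\varepsilon)$ coincides with the assertion for $(\mu,j,-\varepsilon)$, since $\mu+(-\varepsilon)e_j=\lambda$. Hence it suffices to treat $\varepsilon=+1$, so that $\mu_j=\lambda_j+1$ and $\mu_k=\lambda_k$ for $k\neq j$. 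Because $\mu\in\Lambda^{(n,m)}$, Lemma~\ref{lemma:truncation} gives $B_{\lambda,j}>0$ and $B_{\mu,-j}>0$, while Lemma~\ref{lemma:positivity} gives $\Delta_\lambda,\Delta_\mu>0$; all four quantities are therefore nonzero, and I may recast the claim as the ratio identity $\Delta_\mu/\Delta_\lambda=B_{\lambda,j}/B_{\mu,-j}$.

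Next I would compute $\Delta_\mu/\Delta_\lambda$ from the definition \eqref{weights} using the factorial property \eqref{factorial-property}. Since $\lambda$ and $\mu$ differ only in the $j$-th coordinate, every factor of \eqref{weights} not involving $\lambda_j$ cancels in the ratio, and each surviving factorial $[z]_{r,\lambda_j}$ acquires exactly one new term $[z+\lambda_j]_r$. This isolates, on the one hand, the single-particle ($r$-indexed) contribution $\prod_{r}\frac{[\rho_j+\lambda_j+1]_r[\rho_j+\lambda_j+\mathrm{g}_r]_r[\rho_j+\lambda_j+\mathrm{g}'_r+\frac12]_r}{[\rho_j+\lambda_j]_r[\rho_j+\lambda_j+1-\mathrm{g}_r]_r[\rho_j+\lambda_j+\frac12-\mathrm{g}'_r]_r}$, and on the other hand the pairwise contributions coming from the $j<k$ and $a<j$ terms of the second product in \eqref{weights}. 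In parallel I would write out $B_{\lambda,j}/B_{\mu,-j}$ directly from \eqref{B-lambda}, substituting $\mu_j=\lambda_j+1$, and simplify using $1-\mathrm{g}'_r-\tfrac12=\tfrac12-\mathrm{g}'_r$.

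I then match the two sides piece by piece. The single-particle parts agree outright: after cancelling the common factor $[\rho_j+\lambda_j+\frac12]_r$ in $B_{\lambda,j}/B_{\mu,-j}$, the resulting expression is identical to the single-particle part of $\Delta_\mu/\Delta_\lambda$, with no functional identity required. For the pairwise parts, the terms with second index $k>j$ already match term-by-term, since incrementing $\lambda_j$ raises the factorial index $\lambda_j+\delta\lambda_k$ by one for both signs $\delta$, reproducing precisely the $k>j$ factors of $B$. The remaining terms, indexed by $a<j$, require more care and constitute the main bookkeeping obstacle: for $\delta=+1$ the index $\lambda_a+\lambda_j$ again rises by one and matches directly, but for $\delta=-1$ the index $\lambda_a-\lambda_j$ \emph{decreases} by one, so one must apply the factorial property in reverse, $[z]_{1,\ell-1}=[z]_{1,\ell}/[z+\ell-1]_1$. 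After doing so, the antisymmetric labelling of the pair $(a,j)$ in \eqref{weights} versus the $(j,a)$ labelling in \eqref{B-lambda} is reconciled by invoking the oddness $[-z]_1=-[z]_1$ (inherited from $\vartheta_1$ being odd): substituting $w=\rho_j-\rho_a+\lambda_j-\lambda_a$ and flipping each sign turns the $\Delta$ contribution into exactly the $\delta=-1$ factor of $B_{\lambda,j}/B_{\mu,-j}$. This is the only step where a property of $[\cdot]_1$ beyond the factorial recursion enters. Assembling the single-particle and both pairwise pieces yields $\Delta_\mu/\Delta_\lambda=B_{\lambda,j}/B_{\mu,-j}$, which is the desired recurrence \eqref{recurrence}.
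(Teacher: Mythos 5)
Your proposal is correct and follows essentially the same route as the paper's proof: both use the factorial property \eqref{factorial-property} to express $\Delta_\mu/\Delta_\lambda$ as a product of the factors involving the index $j$ and then match these term by term against $B_{\lambda,\varepsilon j}/B_{\mu,-\varepsilon j}$, with the oddness of $[\cdot]_1$ reconciling the $(i,j)$ versus $(j,k)$ labelling for the pairs below $j$. The only cosmetic difference is that you first reduce to $\varepsilon=+1$ via the symmetry of the statement, whereas the paper carries the general $\varepsilon$ through the computation.
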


\begin{proof}
The key observation is that $\Delta_\mu$ and $\Delta_\lambda$ only differ in factors involving the index $j$. To those the factorial property \eqref{factorial-property} can be applied, so as to split $\Delta_\mu$ up into a product of $\Delta_\lambda$ and extra factors. Upon close inspection, these extra factors are seen to make up $B_{\lambda,\varepsilon j}/B_{\mu,-\varepsilon j}$. Specifically, we have that
\begin{equation*}
\begin{split}
\Delta_\mu=&\prod_{\substack{1\leq i\leq n\\1\leq r\leq 4}}\tfrac{[\rho_i+1,\rho_i+\mathrm{g}_r,\rho_i+\mathrm{g}'_r+\frac{1}{2}]_{r,\mu_i}}{[\rho_i,\rho_i+1-\mathrm{g}_r,\rho_i-\mathrm{g}'_r+\frac{1}{2}]_{r,\mu_i}}\prod_{\substack{1\leq i<k\leq n\\\delta=\pm 1}}\tfrac{[\rho_i+\delta\rho_k+\mathrm{g},\rho_i+\delta\rho_k+1]_{1,\mu_i+\delta\mu_k}}{[\rho_i+\delta\rho_k,\rho_i+\delta\rho_k+1-\mathrm{g}]_{1,\mu_i+\delta\mu_k}}\\
=&\prod_{\substack{1\leq i\leq n\\1\leq r\leq 4}}\tfrac{[\rho_i+1,\rho_i+\mathrm{g}_r,\rho_i+\mathrm{g}'_r+\frac{1}{2}]_{r,\lambda_i}}{[\rho_i,\rho_i+1-\mathrm{g}_r,\rho_i-\mathrm{g}'_r+\frac{1}{2}]_{r,\lambda_i}}\prod_{\substack{1\leq i<k\leq n\\\delta=\pm 1}}\tfrac{[\rho_i+\delta\rho_k+\mathrm{g},\rho_i+\delta\rho_k+1]_{1,\lambda_i+\delta\lambda_k}}{[\rho_i+\delta\rho_k,\rho_i+\delta\rho_k+1-\mathrm{g}]_{1,\lambda_i+\delta\lambda_k}}\\
&\times\prod_{1\leq r\leq 4}\tfrac{[\rho_j+\lambda_j+\varepsilon]_r[\rho_j+\lambda_j+\varepsilon\mathrm{g}_r]_r[\rho_j+\lambda_j+\varepsilon(\mathrm{g}'_r+\frac{1}{2})]_r}{[\rho_j+\lambda_j]_r[\rho_j+\lambda_j+\varepsilon(1-\mathrm{g}_r)]_r[\rho_j+\lambda_j-\varepsilon(\mathrm{g}'_r-\frac{1}{2})]_r}\\
&\times\prod_{\substack{j<k\leq n\\\delta=\pm 1}}\tfrac{[\rho_j+\delta\rho_k+\lambda_j+\delta\lambda_k+\varepsilon\mathrm{g}]_1[\rho_j+\delta\rho_k+\lambda_j+\delta\lambda_k+\varepsilon]_1}{[\rho_j+\delta\rho_k+\lambda_j+\delta\lambda_k]_1[\rho_j+\delta\rho_k+\lambda_j+\delta\lambda_k+\varepsilon(1-\mathrm{g})]_1}\\
&\times\prod_{\substack{1\leq i<j\\\delta=\pm 1}}\tfrac{[\rho_i+\delta\rho_j+\lambda_i+\delta\lambda_j+\varepsilon\delta\mathrm{g}]_1[\rho_i+\delta\rho_j+\lambda_i+\delta\lambda_j+\varepsilon\delta]_1}{[\rho_i+\delta\rho_j+\lambda_i+\delta\lambda_j]_1[\rho_i+\delta\rho_j+\lambda_i+\delta\lambda_j+\varepsilon\delta(1-\mathrm{g})]_1}\\
=&\Delta_\lambda\prod_{1\leq r\leq 4}\tfrac{[\rho_j+\lambda_j+\varepsilon\mathrm{g}_r]_r[\rho_j+\lambda_j+\varepsilon(\mathrm{g}'_r+\frac{1}{2})]_r}{[\rho_j+\lambda_j]_r[\rho_j+\lambda_j+\frac{\varepsilon}{2}]_r}\tfrac{[\rho_j+\mu_j]_r[\rho_j+\mu_j-\frac{\varepsilon}{2}]_r}{[\rho_j+\mu_j-\varepsilon\mathrm{g}_r]_r[\rho_j+\mu_j-\varepsilon(\mathrm{g}'_r+\frac{1}{2})]_r}\\
&\times\prod_{\substack{1\leq k\leq n\\k\neq j,\delta=\pm 1}}\tfrac{[\rho_j+\delta\rho_k+\lambda_j+\delta\lambda_k+\varepsilon\mathrm{g}]_1}{[\rho_j+\delta\rho_k+\lambda_j+\delta\lambda_k]_1}\tfrac{[\rho_j+\delta\rho_k+\mu_j+\delta\mu_k]_1}{[\rho_j+\delta\rho_k+\mu_j+\delta\mu_k-\varepsilon\mathrm{g}]_1}\\
=&\Delta_\lambda\frac{B_{\lambda,\varepsilon j}}{B_{\mu,-\varepsilon j}}.
\end{split}
\end{equation*}
\end{proof}

The self-adjointness of $H$ \eqref{Hamiltonian2} in $\ell^2(\Lambda^{(n,m)},\Delta)$ now follows
as an immediate consequence of the recurrence in Lemma \ref{lemma:recurrence}.

\begin{proposition}[Self-adjointness]\label{proposition:self-adjointness}
For parameters subject to the conditions in Eqs. \eqref{coupling-conditions}, \eqref{truncation} (and $\mathrm{g}\neq 1$), the
difference operator $H$ \eqref{Hamiltonian2} is \emph{self-adjoint} in the Hilbert space $\ell^2(\Lambda^{(n,m)},\Delta)$, i.e.
\begin{equation}\label{self-adjointness}
\forall f,g\in\ell^2(\Lambda^{(n,m)},\Delta):\qquad\langle Hf,g\rangle_\Delta=\langle f,Hg\rangle_\Delta.
\end{equation}
\end{proposition}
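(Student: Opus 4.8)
The plan is to establish the identity \eqref{self-adjointness} by expanding both weighted sums and matching them term by term, the crucial input being the reversibility relation of Lemma \ref{lemma:recurrence}, which functions here as a detailed-balance condition. Since $\ell^2(\Lambda^{(n,m)},\Delta)$ is finite-dimensional and the sesquilinear form is the finite sum \eqref{inner-product}, it suffices to verify $\langle Hf,g\rangle_\Delta=\langle f,Hg\rangle_\Delta$ for arbitrary $f,g$ by direct computation. A preliminary observation is that all structure constants are real: for $z\in\mathbb{R}$ and $0<p<1$ the rescaled theta functions $[z]_r$ take real values, as is visible from the real-coefficient series \eqref{theta1}--\eqref{theta4}; hence under the conditions \eqref{coupling-conditions}, \eqref{truncation} (together with $\mathrm{g}\neq 1$, which keeps the $\mathrm{c}_r$ in \eqref{C-r} finite) one has $A_\lambda,B_{\lambda,\varepsilon j}\in\mathbb{R}$, so that $\overline{A_\lambda}=A_\lambda$ and $\overline{B_{\lambda,\varepsilon j}}=B_{\lambda,\varepsilon j}$.

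I would then split $H$ \eqref{Hamiltonian2} into its diagonal part $\lambda\mapsto A_\lambda f_\lambda$ and its off-diagonal (hopping) part. The diagonal contribution to $\langle Hf,g\rangle_\Delta$ equals $\sum_{\lambda}A_\lambda f_\lambda\overline{g_\lambda}\Delta_\lambda$, and since $A_\lambda$ and $\Delta_\lambda$ are real this coincides with the diagonal contribution $\sum_\lambda f_\lambda\overline{A_\lambda g_\lambda}\Delta_\lambda$ to $\langle f,Hg\rangle_\Delta$; thus the diagonal parts match trivially.

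The heart of the argument is the off-diagonal part. Writing $E=\{(\lambda,j,\varepsilon)\mid \lambda\in\Lambda^{(n,m)},\ j\in\{1,\dots,n\},\ \varepsilon\in\{\pm1\},\ \lambda+\varepsilon e_j\in\Lambda^{(n,m)}\}$ for the set of admissible oriented hops, the off-diagonal part of $\langle Hf,g\rangle_\Delta$ is $\sum_{(\lambda,j,\varepsilon)\in E}B_{\lambda,\varepsilon j}\,f_{\lambda+\varepsilon e_j}\,\overline{g_\lambda}\,\Delta_\lambda$. Applying Lemma \ref{lemma:recurrence} with $\mu=\lambda+\varepsilon e_j$ replaces $B_{\lambda,\varepsilon j}\Delta_\lambda$ by $B_{\mu,-\varepsilon j}\Delta_\mu$, after which I would reindex along the edge-reversal involution $\Phi\colon(\lambda,j,\varepsilon)\mapsto(\lambda+\varepsilon e_j,j,-\varepsilon)$, which maps $E$ bijectively onto itself because the neighbour relation is symmetric. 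Under this relabelling the summand becomes $B_{\lambda,\varepsilon j}\,f_\lambda\,\overline{g_{\lambda+\varepsilon e_j}}\,\Delta_\lambda$, which, using $\overline{B_{\lambda,\varepsilon j}}=B_{\lambda,\varepsilon j}$, is exactly the off-diagonal part of $\langle f,Hg\rangle_\Delta$. Combining the diagonal and off-diagonal identities yields \eqref{self-adjointness}.

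Since the recurrence of Lemma \ref{lemma:recurrence} has already carried out the substantive work, I do not anticipate any genuine obstacle; the only points demanding care are the reality of the coefficients and clean bookkeeping of the reindexation, in particular checking that $\Phi$ really is an involution of $E$ so that no boundary hops are gained or lost. This last point is guaranteed by the truncation Lemma \ref{lemma:truncation}, which ensures that $B_{\lambda,\varepsilon j}$ vanishes precisely on the hops leaving $\Lambda^{(n,m)}$, so that the restricted operator \eqref{Hamiltonian2} and the inner-product sum involve exactly the same set of admissible transitions.
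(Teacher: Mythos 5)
Your proposal is correct and follows essentially the same route as the paper: both arguments split $H$ into its diagonal and hopping parts, invoke the detailed-balance relation of Lemma \ref{lemma:recurrence} to rewrite $B_{\lambda,\varepsilon j}\Delta_\lambda$ as $B_{\mu,-\varepsilon j}\Delta_\mu$, and then reindex via $\kappa=\lambda+\varepsilon e_j$ (your edge-reversal involution $\Phi$) to recognize $\langle f,Hg\rangle_\Delta$. Your explicit remarks on the reality of $A_\lambda$ and $B_{\lambda,\varepsilon j}$ and on $\Phi$ preserving the set of admissible hops are points the paper uses only implicitly, but they do not change the substance of the argument.
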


\begin{proof}
A straightforward calculation reveals that for all $f,g\in\ell^2(\Lambda^{(n,m)},\Delta)$ we have
\begin{equation*}
\begin{split}
\langle Hf,g\rangle_\Delta&=\sum_{\lambda\in\Lambda^{(n,m)}}(H f)_\lambda\overline{g_\lambda}\Delta_\lambda\\
&=\sum_{\lambda\in\Lambda^{(n,m)}}A_\lambda f_\lambda\overline{g_\lambda}\Delta_\lambda+\sum_{\substack{1\leq j\leq n\\\varepsilon=\pm 1}}\sum_{\substack{\lambda\in\Lambda^{(n,m)}\\\lambda+\varepsilon e_j\in\Lambda^{(n,m)}}}B_{\lambda,\varepsilon j}f_{\lambda+\varepsilon e_j}\overline{g_\lambda}\Delta_\lambda\\
&\overset{\eqref{recurrence}}{=}\sum_{\lambda\in\Lambda^{(n,m)}} f_\lambda\overline{A_\lambda g_\lambda}\Delta_\lambda+\sum_{\substack{1\leq j\leq n\\\varepsilon=\pm 1}}\sum_{\substack{\lambda\in\Lambda^{(n,m)}\\\lambda+\varepsilon e_j\in\Lambda^{(n,m)}}}B_{\lambda+\varepsilon e_j,-\varepsilon j}f_{\lambda+\varepsilon e_j}\overline{g_\lambda}\Delta_{\lambda+\varepsilon e_j}\\
&=\sum_{\kappa\in\Lambda^{(n,m)}} f_\kappa\overline{A_\kappa g_\kappa}\Delta_\kappa+\sum_{\substack{1\leq j\leq n\\\varepsilon=\pm 1}}\sum_{\substack{\kappa-\varepsilon e_j\in\Lambda^{(n,m)}\\\kappa\in\Lambda^{(n,m)}}}B_{\kappa,-\varepsilon j}f_\kappa\overline{g_{\kappa-\varepsilon e_j}}\Delta_\kappa\\
&=\sum_{\kappa\in\Lambda^{(n,m)}}f_\kappa\overline{(Hg)_\kappa}\Delta_\kappa=\langle f,Hg\rangle_\Delta.
\end{split}
\end{equation*}
\end{proof}

\section{Diagonalization}
\label{sec:4}

\subsection{Spectrum}
\label{subsec:4.1}

Consider the following scaled variants of the elementary trigonometric functions
\begin{equation}\label{key}
[z]_{1,q}=\tfrac{\sin(\frac{\alpha}{2}z)}{\sin(\frac{\alpha}{2})}=\tfrac{q^{\frac{z}{2}}-q^{-\frac{z}{2}}}{q^{\frac{1}{2}}-q^{-\frac{1}{2}}},\quad
[z]_{2,q}=\cos(\tfrac{\alpha}{2}z)=\tfrac{q^{\frac{z}{2}}+q^{-\frac{z}{2}}}{2}\quad\text{with}\ q=e^{\mathrm{i}\alpha},
\end{equation}
together with the corresponding shifted factorials
\begin{equation}
\label{q-factorials}
[z_1,\dots,z_N]_{r,q,l}=\prod_{\substack{1\leq j\leq N\\0\leq k<l}}[z_j+k]_{r,q}\quad\text{with}\ [z]_{r,q,0}=1.
\end{equation}
It is immediate from Eqs. \eqref{thetas}--\eqref{theta4} that $[z;p]_r$ extends analytically in $p$ to the interval $-1<p<1$ such that
\begin{equation}\label{thetas-q}
[z;0]_r=[z]_{r,q}\quad(r=1,2)\quad\text{and}\quad[z;0]_3=[z;0]_4=1.
\end{equation}

\begin{proposition}[Eigenvalues]\label{proposition:simple-spectrum}
\emph{(i)} For parameters subject to the conditions in Eqs. \eqref{coupling-conditions}, \eqref{truncation} and $\mathrm{g}\neq 1$, the eigenvalues of the difference operator $H$ \eqref{Hamiltonian2}
are given by \emph{real-analytic} functions $E_\nu$, $\nu\in\Lambda^{(n,m)}$ in $p\in (-1,1)$ that specialize at $p=0$ to
\begin{equation}\label{trig-eigenvalues}
E_\nu\vert_{p=0}=
2\sum_{1\leq j\leq n}[2(\hat\rho_j+\nu_j)]_{2,q} ,
\end{equation}
where $\hat\rho_j=(n-j)\mathrm{g}+\frac{1}{2}(\mathrm{g}_1+\mathrm{g}_2+\mathrm{g}'_1+\mathrm{g}'_2)$,  $j=1,\dots,n$.

\emph{(ii)} For generic coupling values in the indicated domain, the eigenvalues $E_\nu$, $\nu\in\Lambda^{(n,m)}$ from part \emph{(i)} are \emph{distinct} as analytic functions of $p\in (-1,1)$.
\end{proposition}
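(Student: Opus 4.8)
The plan is to treat the modular parameter $p$ as a real-analytic perturbation parameter, to diagonalize $H$ by analytic perturbation theory, to read off the spectrum at the trigonometric point $p=0$ from the known Koornwinder--Macdonald/$q$-Racah picture, and finally to reduce the genericity in part (ii) to an elementary separation of the trigonometric eigenvalues.

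For the analyticity in part (i), I would first pass to a manifestly real-symmetric matrix. In each rescaled theta quotient $[z]_r=[z;p]_r$ the factors $1\mp 2p^{2l}\cos(\cdot)+p^{4l}$ and $1\mp 2p^{2l-1}\cos(\cdot)+p^{4l-2}$ appearing in \eqref{theta1}--\eqref{theta4} are strictly positive for $|p|<1$, and the prefactor $p^{1/4}$ common to $\vartheta_1,\vartheta_2$ cancels between numerator and denominator; hence each $[z]_r$ is a \emph{real}-valued, real-analytic function of $p\in(-1,1)$ for real $z$, whose only real zeros are those of its $\sin$/$\cos$ prefactor. Since the argument ranges secured in the proofs of Lemmas \ref{lemma:truncation} and \ref{lemma:positivity} are independent of $p$, the denominators occurring in $A_\lambda$, $B_{\lambda,\varepsilon j}$ and $\mathrm{c}_r$ stay nonzero throughout $(-1,1)$ (here I use $\mathrm{g}\neq 1$), so the matrix entries of $H$ are real-analytic on $(-1,1)$. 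Conjugating by the positive (Lemma \ref{lemma:positivity}) diagonal matrix $D=\mathrm{diag}\bigl(\Delta_\lambda^{1/2}\bigr)$ yields $\tilde H(p)=DHD^{-1}$, which the weight recurrence $B_{\lambda,\varepsilon j}\Delta_\lambda=B_{\mu,-\varepsilon j}\Delta_\mu$ of Lemma \ref{lemma:recurrence} turns into a \emph{real symmetric} matrix---the symmetric reformulation of Proposition \ref{proposition:self-adjointness}. Thus $p\mapsto\tilde H(p)$ is a real-analytic family of real symmetric matrices on the interval $(-1,1)$, and Rellich's theorem on analytic perturbation of such families provides an enumeration of its eigenvalues, with multiplicity, by real-analytic functions $E_\nu(p)$.

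The second step of part (i) is to evaluate these branches at $p=0$. By \eqref{thetas-q} the specialization $p=0$ sends $[z]_3,[z]_4$ to $1$ and retains only $[z]_{1,q},[z]_{2,q}$, so $H\vert_{p=0}$ is precisely the finite-lattice truncation of the Koornwinder--Macdonald (multivariate Askey--Wilson) difference operator. Its eigenbasis on $\Lambda^{(n,m)}$ is furnished by the multivariable $q$-Racah polynomials of \cite{die-sto:multivariable}, whose eigenvalues are the quantities on the right-hand side of \eqref{trig-eigenvalues}. Matching the multiset $\{E_\nu(0)\}$ delivered by Rellich's theorem against this explicit list fixes the labeling by $\nu\in\Lambda^{(n,m)}$ and establishes \eqref{trig-eigenvalues}. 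I regard this identification---pinning down the trigonometric operator and importing its $q$-Racah spectrum from \cite{die-sto:multivariable}---as the main substantive point; the remaining ingredients are standard.

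For part (ii), I would use that the $E_\nu$ are real-analytic on the connected interval $(-1,1)$: if $E_\nu\equiv E_{\nu'}$ then in particular $E_\nu(0)=E_{\nu'}(0)$, so it suffices to separate the eigenvalues at $p=0$ for generic couplings. Each difference $E_\nu(0)-E_{\nu'}(0)$ is a real-analytic function of the couplings and vanishes only on a proper analytic subvariety as soon as it is not identically zero; as there are finitely many pairs, generic couplings then separate all eigenvalues simultaneously. To see the non-identical vanishing for a fixed pair $\nu\neq\nu'$, set $s=\tfrac12(\mathrm{g}_1+\mathrm{g}_2+\mathrm{g}'_1+\mathrm{g}'_2)$ and $Z_\nu=\sum_{j=1}^n q^{(n-j)\mathrm{g}+\nu_j}$ with $q=e^{\mathrm{i}\alpha}$, so that \eqref{trig-eigenvalues} reads $E_\nu(0)=2\,\mathrm{Re}\bigl(e^{\mathrm{i}\alpha s}Z_\nu\bigr)$. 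Varying $\mathrm{g}'_1$ alone moves $s$ but fixes $\alpha$ and the $Z$'s, so $E_\nu(0)-E_{\nu'}(0)=2\,\mathrm{Re}\bigl(e^{\mathrm{i}\alpha s}(Z_\nu-Z_{\nu'})\bigr)$ can vanish identically in $\mathrm{g}'_1$ only if $Z_\nu=Z_{\nu'}$. Varying $\mathrm{g}_1$ then moves $\alpha$ through an open interval, and $Z_\nu-Z_{\nu'}$ becomes a finite integer combination of characters $e^{\mathrm{i}\alpha c}$ with real frequencies $c$; by linear independence of such characters it vanishes identically only if the multisets $\{(n-j)\mathrm{g}+\nu_j\}_{j}$ and $\{(n-j)\mathrm{g}+\nu'_j\}_{j}$ coincide. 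But within each multiset the entries are strictly decreasing in $j$, with gaps $\mathrm{g}+\nu_j-\nu_{j+1}\geq\mathrm{g}>0$, so equality of the multisets forces $(n-j)\mathrm{g}+\nu_j=(n-j)\mathrm{g}+\nu'_j$ for every $j$, i.e. $\nu=\nu'$. Hence for $\nu\neq\nu'$ the difference $E_\nu(0)-E_{\nu'}(0)$ is not identically zero in the couplings, and part (ii) follows.
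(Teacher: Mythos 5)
Your argument is correct and follows essentially the same route as the paper: part (i) is analytic perturbation theory for the self-adjoint family (your symmetrization $\tilde H=DHD^{-1}$ with $D=\mathrm{diag}(\Delta_\lambda^{1/2})$, justified by Lemmas \ref{lemma:positivity} and \ref{lemma:recurrence}, is just the matrix form of Proposition \ref{proposition:self-adjointness} plus Kato/Rellich), followed by identification of $H\vert_{p=0}$ with the truncated Koornwinder--Macdonald operator and importation of the $q$-Racah spectrum from \cite{die-sto:multivariable}. One bookkeeping point you gloss over: the operator $D^{\mathrm{qR}}$ of \cite[Eq. (6.15)]{die-sto:multivariable} has eigenvalues $2\sum_j([2(\hat\rho_j+\nu_j)]_{2,q}-[2\hat\rho_j]_{2,q})$, and the identification is $H\vert_{p=0}=D^{\mathrm{qR}}+c$ with $c=2\sum_j[2\hat\rho_j]_{2,q}$; establishing this requires the functional identity $A_\lambda\vert_{p=0}+\sum_{j,\varepsilon}B_{\lambda,\varepsilon j}\vert_{p=0}=c$ (the ``sum-to-constant'' identity from \cite{die:difference}, combined with Lemma \ref{lemma:truncation} to handle boundary terms), which is exactly what converts the shifted eigenvalues into \eqref{trig-eigenvalues}. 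This identity is the one nontrivial ingredient your write-up does not name, though it does not change the conclusion. For part (ii) your two-step genericity argument---varying $\mathrm{g}'_1$ at fixed $\alpha$ to force $Z_\nu=Z_{\nu'}$, then varying $\mathrm{g}_1$ to move $\alpha$ and invoking linear independence of the characters $e^{\mathrm{i}\alpha c}$ together with the strict monotonicity of the exponents $(n-j)\mathrm{g}+\nu_j$---is a more explicit (and arguably cleaner) version of the paper's bare assertion that the exponential form of $E_\nu\vert_{p=0}$ makes the eigenvalues distinct as analytic functions of the couplings; both are valid.
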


\begin{proof}
Since the difference operator $H$ \eqref{Hamiltonian2} is self-adjoint in $\ell^2(\Lambda^{(n,m)},\Delta)$ by Proposition \ref{proposition:self-adjointness},
with coefficients $A_\lambda$ \eqref{A-lambda} and $B_{\lambda,\varepsilon j}$ \eqref{B-lambda} that are 
analytic in $p\in (-1,1)$, it follows that the spectrum is given by real eigenvalues that are also analytic in $p\in (-1,1)$ (cf. e.g. \cite[Chapter II, Theorem 6.1]{kat:perturbation}).

Furthermore, it is clear from Eqs. \eqref{A-lambda}--\eqref{C-r} and Eq. \eqref{thetas-q} that at $p=0$
\begin{equation*}
A_\lambda\vert_{p=0}=\sum_{r=1,2}\mathrm{c}_{r,q}\Bigl( \prod_{\substack{1\leq j\leq n\\\delta=\pm 1}}\tfrac{[\delta(\rho_j+\lambda_j)-\frac{1}{2}+\mathrm{g}]_{r,q}}{[\delta(\rho_j+\lambda_j)-\frac{1}{2}]_{r,q}}-1\Bigr) ,
\end{equation*}
and
\begin{equation*}
B_{\lambda,\varepsilon j}\vert_{p=0}=\prod_{r=1,2}\tfrac{[\rho_j+\lambda_j+\varepsilon\mathrm{g}_r]_{r,q}[\rho_j+\lambda_j+\varepsilon(\mathrm{g}'_r+\frac{1}{2})]_{r,q}}{[\rho_j+\lambda_j]_{r,q}[\rho_j+\lambda_j+\frac{\varepsilon}{2}]_{r,q}}\prod_{\substack{1\leq k\leq n\\k\neq j,\delta=\pm 1}}\tfrac{[\rho_j+\delta\rho_k+\lambda_j+\delta\lambda_k+\varepsilon\mathrm{g}]_{1,q}}{[\rho_j+\delta\rho_k+\lambda_j+\delta\lambda_k]_{1,q}}
\end{equation*}
with
\begin{equation*}
\mathrm{c}_{r,q}=\tfrac{2}{[\mathrm{g}]_{1,q}[\mathrm{g}-1]_{1,q}}\prod_{s=1,2}[\mathrm{g}_{\pi_r(s)}-\tfrac{1}{2}]_{s,q}[\mathrm{g}'_{\pi_r(s)}]_{s,q}.
\end{equation*}
By means of the functional identity (cf. \cite[Section III.C]{die:difference}) 
\begin{equation*}
A_\lambda\vert_{p=0}+\sum_{\substack{1\leq j\leq n\\\varepsilon=\pm 1}}B_{\lambda,\varepsilon j}\vert_{p=0}=c,
\end{equation*}
where $ c=2\sum_{1\leq j\leq n}[2\hat\rho_j]_{2,q} $ and 
$\hat\rho_j=(n-j)\mathrm{g}+\frac{1}{2}(\mathrm{g}_1+\mathrm{g}_2+\mathrm{g}'_1+\mathrm{g}'_2)$, it thus follows
via Lemma \ref{lemma:truncation} that for all $f\in\ell^2(\Lambda^{(n,m)},\Delta)$ and $\lambda\in\Lambda^{(n,m)}$:
\begin{equation*}
(Hf)_\lambda\vert_{p=0}=\sum_{\substack{1\leq j\leq n\\\lambda+\varepsilon e_j\in\Lambda^{(n,m)}}}B_{\lambda,\varepsilon j}\vert_{p=0}(f_{\lambda+\varepsilon e_j}-f_\lambda)+c.
\end{equation*}
The upshot is that  $H$ \eqref{Hamiltonian2} reduces at $p=0$ to the
discrete Koornwinder-Macdonald operator $D^{\mathrm{qR}}$ from \cite[Eq. (6.15)]{die-sto:multivariable} up to an additive constant:
\begin{equation*}
H\vert_{p=0}=D^{\mathrm{qR}}+c.
\end{equation*}
By \cite[Theorem 6.5]{die-sto:multivariable},  
the eigenvalues of the discrete Koornwinder-Macdonald operator $D^{\mathrm{qR}}$ are given by
\begin{equation*}
2\sum_{1\leq j\leq n}([2(\hat\rho_j+\nu_j)]_{2,q}-[2\hat\rho_j]_{2,q})\qquad (\nu\in\Lambda^{(n,m)}),
\end{equation*}
so the eigenvalues $E_\nu\vert_{p=0}$ of $H\vert_{p=0}$ become as asserted in Eq. \eqref{trig-eigenvalues},
which completes the proof of part \emph{(i)}.

To infer part \emph{(ii)}, we observe from the explicit formula that---apart from a multiplicative factor of the form $-q^{-\hat{\rho}_1-m}$---the eigenvalues $E_\nu\vert_{p=0}$, $\nu\in\Lambda^{(n,m)}$
are of the form
\begin{equation*}
 \sum_{1\leq j \leq n}  \left( q^{(j-1)\mathrm{g}+\mathrm{g}_1^\prime+\mathrm{g}_2^\prime+ \nu_{n+1-j}} -q^{(j-1)\mathrm{g}+m-\nu_j} \right)
 \quad\text{with}\ q=\exp\bigl( {\textstyle \frac{\pi \mathrm{i}}{(n-1)\mathrm{g}+\mathrm{g}_1+\mathrm{g}_2+m} }\bigr) 
\end{equation*}
(where we have also used that $q^{(n-1)\mathrm{g}+\mathrm{g}_1+\mathrm{g}_2+m}=-1$). At $p=0$ the eigenvalues  are therefore distinct as analytic functions of the coupling parameters, so for generic coupling values in our domain the eigenvalues $E_\nu$, $\nu\in\Lambda^{(n,m)}$ of $H$ \eqref{Hamiltonian2} must be distinct as analytic functions of $p\in (-1,1)$.
\end{proof}

\subsection{Eigenbasis}
\label{subsec:4.2}

Starting from the function $\chi\in\ell^2(\Lambda^{(n,m)},\Delta)$ given by
\begin{subequations}
\begin{equation}\label{chi}
\chi_\lambda=\begin{cases}1,&\text{if}\ \lambda=0,\\0,&\text{if}\ \lambda\neq 0,\end{cases}\qquad(\lambda\in\Lambda^{(n,m)}),
\end{equation}
we will construct eigenfunctions $h^{(\nu)}$, $\nu\in\Lambda^{(n,m)}$ in $\ell^2(\Lambda^{(n,m)},\Delta)$ as follows:
\begin{equation}\label{h-nu}
h^{(\nu)}=\big(\prod_{\substack{\mu\in\Lambda^{(n,m)}\\\mu\neq\nu}}\tfrac{H-E_\mu}{E_\nu-E_\mu}\big)\chi.
\end{equation}
\end{subequations}
By Proposition \ref{proposition:simple-spectrum}, the function values $h^{(\nu)}_\lambda$, $\lambda\in\Lambda^{(n,m)}$ are well-defined
as meromorphic functions
of $p\in (-1,1)$ at the generic values of the coupling parameters from the domain \eqref{coupling-conditions}, \eqref{truncation} for which the discriminant
\begin{equation}\label{discriminant}
\Delta(H)=\prod_{\substack{\mu,\nu\in\Lambda^{(n,m)}\\ \mu\neq\nu} } (E_\nu-E_\mu) 
\end{equation}
does not vanish as an analytic function of $p\in (-1,1)$.

\begin{theorem}[Eigenfunctions]\label{theorem:diagonalization}
The following statements hold for parameters subject to the conditions in Eqs. \eqref{coupling-conditions}, \eqref{truncation} with $\mathrm{g}\neq 1$.

\begin{subequations}
\emph{(i)} The second-order difference operator $H$ \eqref{Hamiltonian2}
is diagonalized in the Hilbert space $\ell^2(\Lambda^{(n,m)},\Delta)$ by an orthonormal basis of eigenfunctions
$f^{(\nu)}$, $\nu\in\Lambda^{(n,m)}$ that depend analytically on $p\in (-1,1)$ such that
$Hf^{(\nu)}=E_\nu f^{(\nu)} $ with $E_\nu$ as in Proposition \ref{proposition:simple-spectrum}.

\emph{(ii)} For generic values of the coupling parameters such that the discriminant
$\Delta(H)$ \eqref{discriminant} does not vanish
as an analytic function of $p\in (-1,1)$ (cf. Proposition \ref{proposition:simple-spectrum}), the values of
$h^{(\nu)}$ \eqref{h-nu} extend to real-analytic functions of $p\in (-1,1)$
such that $\forall \nu,\tilde{\nu}\in\Lambda^{(n,m)}$:
\begin{equation}\label{eigenvalue-equation}
Hh^{(\nu)}=E_\nu h^{(\nu)} 
\end{equation}
and
\begin{equation}\label{orthogonality}
\langle h^{(\nu)},h^{(\tilde\nu)}\rangle_\Delta=
\begin{cases}
h^{(\nu)}_0 &\text{if}\ \nu=\tilde\nu,\\
0&\text{if}\ \nu\neq\tilde\nu .
\end{cases}
\end{equation}
Moreover, the functions $h^{(\nu)}\in \ell^2(\Lambda^{(n,m)},\Delta)$ thus obtained
coincide with the orthonormal eigenfunctions in part \emph{(i)} up to normalization:
\begin{equation}\label{eigenfunctions}
h^{(\nu)}=\overline{f^{(\nu)}_0} f^{(\nu)} .
\end{equation}

\emph{(iii)} The extension of the functions $h^{(\nu)}$, $\nu\in\Lambda^{(n,m)}$ to the case of arbitrary coupling parameters stemming from Eq. \eqref{eigenfunctions} constitutes an orthogonal eigenbasis for  $H$ \eqref{Hamiltonian2}
such that $Hh^{(\nu)}=E_\nu h^{(\nu)} $, unless $p\in (-1,1)$ belongs to the finite zero locus
$\{ 0< |p| \leq \epsilon \mid \prod_{\nu\in\Lambda^{(n,m)}} h^{(\nu)}_0 =0\}$ for some $\epsilon\in (0,1)$.

\emph{(iv)} At $p=0$ the eigenbasis in part \emph{(iii)} is given explicitly by
\begin{equation}\label{trig-eigenfunctions}
h_\lambda^{(\nu)}\vert_{p=0}=\frac{\textsc{c}_{\lambda,q}}{\textsc{n}_{\nu ,q} } P_\lambda(q^{\hat\rho+\nu};q,t,{\mathrm{a}}, {\mathrm{b}},{\mathrm{c}},{\mathrm{d}})
\quad (\lambda,\nu\in\Lambda^{(n,m)}),
\end{equation}
where
\begin{equation}\label{trig-c-lambda}
\textsc{c}_{\lambda,q}=\prod_{\substack{1\leq j\leq n\\r=1,2}}\tfrac{[\rho_j,\rho_j+\frac{1}{2}]_{r,q,\lambda_j}}{[\rho_j+\mathrm{g}_r,\rho_j+\mathrm{g}'_r+\frac{1}{2}]_{r,q,\lambda_j}}\prod_{\substack{1\leq j<k\leq n\\\delta=\pm 1}}\tfrac{[\rho_j+\delta\rho_k]_{1,q,\lambda_j+\delta\lambda_k}}{[\rho_j+\delta\rho_k+\mathrm{g}]_{1,q,\lambda_j+\delta\lambda_k}}
\end{equation}
and $P_\lambda$ denotes the monic  Koornwinder-Macdonald polynomial \cite{koo:askey} with parameters
\begin{equation}\label{trig-parameters}
q=e^{\mathrm{i}\alpha},\quad t=q^{\mathrm{g}},\quad {\mathrm{a}}=q^{\hat{\mathrm{g}}_1},\quad {\mathrm{b}}=-q^{\hat{\mathrm{g}}_2},\quad {\mathrm{c}}=q^{\hat{\mathrm{g}}'_1+\frac{1}{2}},\quad {\mathrm{d}}=-q^{\hat{\mathrm{g}}'_2+\frac{1}{2}} .
\end{equation}
Here the transformed parameters $\hat{\mathrm{g}}_1,\hat{\mathrm{g}}_2,\hat{\mathrm{g}}'_1,\hat{\mathrm{g}}'_2$ are related to $\mathrm{g}_1,\mathrm{g}_2,\mathrm{g}'_1,\mathrm{g}'_2$ via the reflection
\begin{equation}\label{dual-couplings}
\begin{pmatrix}
\hat{\mathrm{g}}_1\\\hat{\mathrm{g}}_2\\\hat{\mathrm{g}}'_1\\\hat{\mathrm{g}}'_2
\end{pmatrix}=\frac{1}{2}
\begin{pmatrix}
1&\phantom{+}1&\phantom{+}1&\phantom{+}1\\
1&\phantom{+}1&-1&-1\\
1&-1&\phantom{+}1&-1\\
1&-1&-1&\phantom{+}1
\end{pmatrix}
\begin{pmatrix}
\mathrm{g}_1\\\mathrm{g}_2\\\mathrm{g}'_1\\\mathrm{g}'_2
\end{pmatrix},
\end{equation}
and
\begin{equation}\label{snorms:p=0}
\textsc{n}_{\nu ,q} = \sum_{\lambda\in\Lambda^{(n,m)} }
\textsc{c}_{\lambda,q}^2
P_\lambda^2 (q^{\hat\rho+\nu};q,t,{\mathrm{a}},{\mathrm{b}},{\mathrm{c}},{\mathrm{d}}) \Delta_{\lambda,q} ,
\end{equation}
with
\begin{equation}\label{Delta:p=0}
\begin{split}
\Delta_{\lambda,q}=&\prod_{1\leq j\leq n}\tfrac{[2\rho_j+2\lambda_j]_{1,q}}{[2\rho_j]_{1,q}}\prod_{r=1,2}
\tfrac{[\rho_j+\mathrm{g}_r,\rho_j+\mathrm{g}'_r+\frac{1}{2}]_{r,q,\lambda_j}}{[\rho_j+1-\mathrm{g}_r,\rho_j-\mathrm{g}'_r+\frac{1}{2}]_{r,q,\lambda_j}}\\
&\times\prod_{\substack{1\leq j<k\leq n\\\delta=\pm 1}}\tfrac{[\rho_j+\delta\rho_k+\lambda_j+\delta\lambda_k]_{1,q}}{[\rho_j+\delta\rho_k]_{1,q}}\tfrac{[\rho_j+\delta\rho_k+\mathrm{g}]_{1,q,\lambda_j+\delta\lambda_k}}{[\rho_j+\delta\rho_k+1-\mathrm{g}]_{1,q,\lambda_j+\delta\lambda_k}}.
\end{split}
\end{equation}
\end{subequations}
\end{theorem}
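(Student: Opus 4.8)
The plan is to treat the four parts in turn, leaning on the self-adjointness and simple-spectrum results already in hand. For part (i) I would appeal to the analytic perturbation theory of self-adjoint families over a real interval. By Proposition~\ref{proposition:self-adjointness} the matrix of $H$ \eqref{Hamiltonian2} in the standard basis of $\ell^2(\Lambda^{(n,m)},\Delta)$ is self-adjoint, and by construction its entries $A_\lambda$, $B_{\lambda,\varepsilon j}$ are real-analytic in $p$. Rellich's theorem (the real-analytic case of \cite[Ch.~II]{kat:perturbation}) then produces globally real-analytic orthonormal eigenvectors $f^{(\nu)}(p)$ together with real-analytic eigenvalues, the latter forming precisely the family $\{E_\nu\}$ of Proposition~\ref{proposition:simple-spectrum}. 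Matching each analytic eigenvalue branch to the $E_\nu$ it coincides with---unambiguous for generic couplings, where the $E_\nu|_{p=0}$ are pairwise distinct by Proposition~\ref{proposition:simple-spectrum}(ii)---fixes the labeling and gives $Hf^{(\nu)}=E_\nu f^{(\nu)}$ throughout.

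For part (ii) the key is to read the operator $\prod_{\mu\neq\nu}(H-E_\mu)/(E_\nu-E_\mu)$ in \eqref{h-nu} as a Lagrange interpolation polynomial in $H$. Wherever the spectrum is simple---that is, off the isolated zeros of the discriminant \eqref{discriminant}---this polynomial is exactly the orthogonal spectral projector $P_\nu$ onto the $E_\nu$-eigenline, whence $h^{(\nu)}=P_\nu\chi$. The eigenvalue equation \eqref{eigenvalue-equation} is then immediate, because $(H-E_\nu)\prod_{\mu\neq\nu}(H-E_\mu)=\prod_{\mu}(H-E_\mu)$ annihilates every vector; the orthogonality \eqref{orthogonality} follows from $P_\nu^\ast=P_\nu$ and $P_\nu P_{\tilde\nu}=\delta_{\nu\tilde\nu}P_\nu$, which yield $\langle h^{(\nu)},h^{(\tilde\nu)}\rangle_\Delta=\delta_{\nu\tilde\nu}\langle\chi,P_\nu\chi\rangle_\Delta=\delta_{\nu\tilde\nu}\overline{h^{(\nu)}_0}$, the value $\overline{h^{(\nu)}_0}$ being real since the matrix of $H$ is real (so $h^{(\nu)}$ is real-valued) and $\Delta_0=1$. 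Writing the rank-one projector as $P_\nu g=\langle g,f^{(\nu)}\rangle_\Delta f^{(\nu)}$ gives $h^{(\nu)}=P_\nu\chi=\overline{f^{(\nu)}_0}\,f^{(\nu)}$, which is \eqref{eigenfunctions}; since its right-hand side is real-analytic by part (i), the a priori merely meromorphic $h^{(\nu)}$ agrees with an analytic function on a dense subset of $(-1,1)$ and therefore extends real-analytically, with the apparent poles at the zeros of $\Delta(H)$ removable.

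For part (iii) I would continue the construction analytically in the coupling parameters. The values $h^{(\nu)}_\lambda$ are rational in the entries of $H$ and in the $E_\mu$, hence real-analytic on the domain \eqref{coupling-conditions}, \eqref{truncation}, and the identities \eqref{eigenvalue-equation}, \eqref{orthogonality} persist by continuity to arbitrary couplings there. The orthogonal system $\{h^{(\nu)}\}$ can fail to be a basis only when some $h^{(\nu)}=0$, i.e. $h^{(\nu)}_0=\langle h^{(\nu)},h^{(\nu)}\rangle_\Delta=0$. Here I would invoke part (iv): at $p=0$ one has $h^{(\nu)}_0=1/\textsc{n}_{\nu,q}>0$ because $\textsc{c}_{0,q}=1$ and $P_0\equiv1$. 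Thus each $h^{(\nu)}_0$ is real-analytic in $p$ and strictly positive at $p=0$, so $\prod_{\nu}h^{(\nu)}_0$ is a nonzero analytic function whose zero set is discrete; on a small interval $0<|p|\le\epsilon$ this locus is finite, and off it all $h^{(\nu)}$ are nonzero and form an orthogonal eigenbasis.

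Part (iv) is the substantive computation and I expect it to be the main obstacle. Its backbone is the identity $H|_{p=0}=D^{\mathrm{qR}}+c$ established in the proof of Proposition~\ref{proposition:simple-spectrum}, which identifies the trigonometric degeneration of $H$ with the discrete Koornwinder--Macdonald operator of \cite{die-sto:multivariable} up to an additive constant. From that reference I would quote two facts: first, that the monic Koornwinder polynomials $P_\lambda(\,\cdot\,;q,t,\mathrm{a},\mathrm{b},\mathrm{c},\mathrm{d})$ with parameters \eqref{trig-parameters}, evaluated on the spectral grid $q^{\hat\rho+\nu}$ and dressed by the gauge $\textsc{c}_{\lambda,q}$ \eqref{trig-c-lambda}, are eigenfunctions of $D^{\mathrm{qR}}$ in the position variable $\lambda$ with eigenvalue $E_\nu|_{p=0}$ \eqref{trig-eigenvalues}; and second, that they obey the discrete orthogonality with weight $\Delta_{\lambda,q}$ \eqref{Delta:p=0} and squared norms $\textsc{n}_{\nu,q}$ \eqref{snorms:p=0}. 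The hard point is the self-duality of the multivariate $q$-Racah system: it is what converts the Koornwinder eigenvalue equation---normally read in the degree variable---into one for $H|_{p=0}$ in the position variable $\lambda$, with $\textsc{c}_{\lambda,q}$ absorbing the mismatch between the weight $\Delta_{\lambda,q}$ and the monic normalization; correctly transcribing the dual orthogonality measure of \cite{die-sto:multivariable} into $\textsc{n}_{\nu,q}$ is where the real bookkeeping lies. Granting these, set $\phi^{(\nu)}_\lambda=\textsc{c}_{\lambda,q}P_\lambda(q^{\hat\rho+\nu})$; then $\phi^{(\nu)}$ is an $E_\nu|_{p=0}$-eigenfunction with $\phi^{(\nu)}_0=1$ and $\langle\phi^{(\nu)},\phi^{(\nu)}\rangle_\Delta=\textsc{n}_{\nu,q}$, so $\phi^{(\nu)}/\textsc{n}_{\nu,q}$ lies in the $E_\nu|_{p=0}$-eigenline and has value $1/\textsc{n}_{\nu,q}$ at $\lambda=0$ equal to its own squared norm. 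Since that eigenline is one-dimensional (simple spectrum at $p=0$) and the requirements $h\in\ker(H-E_\nu)$ together with $\langle h,h\rangle_\Delta=h_0\neq0$ determine a vector uniquely---requirements that $h^{(\nu)}|_{p=0}$ also meets by \eqref{orthogonality}---I conclude $h^{(\nu)}_\lambda|_{p=0}=\textsc{c}_{\lambda,q}P_\lambda(q^{\hat\rho+\nu})/\textsc{n}_{\nu,q}$, which is \eqref{trig-eigenfunctions}.
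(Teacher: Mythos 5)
Your proposal is correct and follows essentially the same route as the paper's proof: analytic perturbation theory of the self-adjoint family for (i), identification of the Lagrange interpolation operator with the rank-one spectral projector (equivalently, expanding $\chi$ in the orthonormal eigenbasis) for (ii), positivity of $\prod_\nu h^{(\nu)}_0$ at $p=0$ for (iii), and the reduction $H\vert_{p=0}=D^{\mathrm{qR}}+c$ together with \cite[Theorem 6.5]{die-sto:multivariable} and the normalization condition $h^{(\nu)}_0=\langle h^{(\nu)},h^{(\nu)}\rangle_\Delta$ for (iv). No gaps.
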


\begin{proof}
\emph{(i)} 
By Proposition \ref{proposition:self-adjointness}, the difference operator $H$ \eqref{Hamiltonian2} is self-adjoint
in the $\binom{n+m}{n}$-dimensional Hilbert space $\ell^2(\Lambda^{(n,m)},\Delta)$. 
The spectral theorem for self-adjoint operators in finite dimension (cf. e.g. \cite[Chapter I.6.9]{kat:perturbation}) thus guarantees the existence of an orthonormal basis of eigenfunctions.
Since the dependence on $p$ of the coefficients $A_\lambda$ \eqref{A-lambda} and $B_{\lambda,\varepsilon j}$ \eqref{B-lambda} of the difference operator as well as that of the orthogonality measure $\Delta$ \eqref{Delta2}
are analytic for $p\in (-1,1)$, the eigenfunctions constituting the orthonormal
eigenbasis too can be chosen analytically in $p\in (-1,1)$ in one-to-one correspondence with the (not necessarily multiplicity-free) eigenvalues of
Proposition \ref{proposition:simple-spectrum} (cf. e.g. \cite[Chapter II.6.2]{kat:perturbation}).

\emph{(ii)} Upon decomposing $\chi$ \eqref{chi} in the orthonormal eigenbasis $f^{(\nu)}$, $\nu\in\Lambda^{(n,m)}$ from
part \emph{(i)}:
\begin{equation*}
\chi=\sum_{\nu\in\Lambda^{(n,m)}} \langle\chi,f^{(\nu)}\rangle_\Delta f^{(\nu)}=
\sum_{\nu\in\Lambda^{(n,m)}} \overline{f_0^{(\nu)}} f^{(\nu)}
\end{equation*}
(since $\Delta_0=1$), it is seen that with our genericity assumptions on the coupling parameters in place
\begin{align*}
h^{(\nu)}&=\big(\prod_{\substack{\mu\in\Lambda^{(n,m)}\\\mu\neq\nu}}\tfrac{H-E_\mu}{E_\nu-E_\mu}\big)\chi
=\sum_{\tilde\nu\in\Lambda^{(n,m)}} \overline{f^{(\tilde\nu)}_0}
\big(\prod_{\substack{\mu\in\Lambda^{(n,m)}\\\mu\neq\nu}}\tfrac{H-E_\mu}{E_\nu-E_\mu}\big)f^{(\tilde\nu)} \\
&=\sum_{\tilde\nu\in\Lambda^{(n,m)}} \overline{f^{(\tilde\nu)}_0}
\big(\prod_{\substack{\mu\in\Lambda^{(n,m)}\\\mu\neq\nu}}\tfrac{E_{\tilde\nu}-E_\mu}{E_\nu-E_\mu}\big)f^{(\tilde\nu)}
=\overline{f^{(\nu)}_0} f^{(\nu)}
\end{align*}
as an identity between real-analytic functions of $p\in (-1,1)$. (Notice in this connection that, as a function of $p\in (-1,1)$, the eigenfunction 
$h^{(\nu)}$ \eqref{h-nu} is real-valued and meromorphic while $\overline{f^{(\nu)}_0} f^{(\nu)}$ is continuous.)
From the equality in question (and the orthonormality of the basis $f^{(\nu)}$, $\nu\in\Lambda^{(n,m)}$) it is clear that $\langle h^{(\nu)}, h^{(\nu)}\rangle_\Delta=| f^{(\nu)}_0 |^2=h^{(\nu)}_0$.

\emph{(iii)} This assertion is immediate from part \emph{(ii)} and the observation that 
\begin{equation*}
 \prod_{\nu\in\Lambda^{(n,m)}} h^{(\nu)}_0\Bigr|_{p=0} >0
\end{equation*}
(by part \emph{(iv)}).

\emph{(iv)}  The operators $H\vert_{p=0}$ and $D^{\mathrm{qR}}$ \cite[Eq. (6.15)]{die-sto:multivariable} differ only by a constant (cf. the proof of Proposition \ref{proposition:simple-spectrum}), and thus have the same eigenfunctions. 
Since the eigenvalues at $p=0$ in Proposition \ref{proposition:simple-spectrum} are nondegenerate as analytic
functions of the coupling parameters, the corresponding eigenfunctions are unique up to normalization.
The asserted formulas for the eigenfunctions in terms of Koornwinder-Macdonald polynomials
go back to \cite[Theorem 6.5]{die-sto:multivariable}. Notice in this connection that these explicit formulas confirm that our normalization is such that
$h_0^{(\nu)}\vert_{p=0}=1/\textsc{n}_{\nu,q}= \langle h^{(\nu)}, h^{(\nu)}\rangle_\Delta \vert_{p=0}$ ($>0$).
\end{proof}

\begin{subequations}
It is immediate from Theorem \ref{theorem:diagonalization} that the eigenfunctions satisfy the following dual orthogonality relation
\begin{equation}\label{dual-orthogonality}
\forall\lambda,\mu\in\Lambda^{(n,m)}:\quad\sum_{\nu\in\Lambda^{(n,m)}} f^{(\nu)}_\lambda \overline{f^{(\nu)}_\mu}  =
\begin{cases}\tfrac{1}{ \Delta_\lambda}&\text{if}\ \lambda=\mu,\\0&\text{if}\ \lambda\neq\mu.\end{cases} 
\end{equation}
In view of Eq. \eqref{eigenfunctions}, this means that

\begin{equation}\label{dual-orthogonality2}
\forall\lambda,\mu\in\Lambda^{(n,m)}:\quad\sum_{\nu\in\Lambda^{(n,m)}}   \frac{h^{(\nu)}_\lambda h^{(\nu)}_\mu}{\langle h^{(\nu)}, h^{(\nu)}\rangle_\Delta} =
\begin{cases}\tfrac{1}{ \Delta_\lambda}&\text{if}\ \lambda=\mu,\\0&\text{if}\ \lambda\neq\mu,\end{cases} 
\end{equation}
unless $p$ belongs to the finite zero-locus $\{ 0<| p|\leq \epsilon \mid \prod_{\nu\in\Lambda^{(n,m)}} h^{(\nu)}_0 =0\}$ for some $\epsilon\in (0,1)$.
\end{subequations}

\begin{note}
From \cite[Theorem 6.5]{die-sto:multivariable} one extracts the following
explicit product formula for the normalization constant $\textsc{n}_{\nu ,q}$ \eqref{snorms:p=0}:
\begin{equation*}
\begin{split}
\textsc{n}_{\nu,q}=&\prod_{1\leq j\leq n}\tfrac{[\mathrm{g}_2-\rho_j,\hat\rho_j-\hat{\mathrm{g}}_2]_{2,q,m}}{[\rho_j-\mathrm{g}'_2+\frac{1}{2},\hat\rho_j-\hat{\mathrm{g}}'_2+\frac{1}{2}]_{2,q,m}}\prod_{r=1,2}\tfrac{[\hat{\rho}_j+\hat{\mathrm{g}}_r,\hat{\rho}_j+1-\hat{\mathrm{g}}_r,\hat{\rho}_j+\hat{\mathrm{g}}'_r+\frac{1}{2},\hat{\rho}_j-\hat{\mathrm{g}}'_r+\frac{1}{2}]_{r,q,\nu_j}}{[\hat{\rho}_j,\hat{\rho}_j+1,\hat{\rho}_j+\frac{1}{2},\hat{\rho}_j+\frac{1}{2}]_{r,q,\nu_j}}\\
&\times\prod_{\substack{1\leq j<k\leq n\\\delta=\pm 1}}\tfrac{[\hat{\rho}_j+\delta\hat{\rho}_k+\mathrm{g},\hat{\rho}_j+\delta\hat{\rho}_k+1-\mathrm{g}]_{1,q,\nu_j+\delta\nu_k}}{[\hat{\rho}_j+\delta\hat{\rho}_k,\hat{\rho}_j+\delta\hat{\rho}_k+1]_{1,q,\nu_j+\delta\nu_k}} .
\end{split}
\end{equation*}
\end{note}

\section{Explicit eigenfunctions in special cases}
\label{sec:5}

According to Theorem \ref{theorem:diagonalization}, the eigenfunctions at the trigonometric level ($p=0$) are given by the multivariable $q$-Racah polynomials of \cite{die-sto:multivariable}. Below we highlight two other special instances for which the eigenfunctions and their normalization can be computed explicitly in terms of polynomials on the spectrum.

\subsection{The case \emph{m}=1: one-column partitions}
\label{subsec:5.1}

Throughout this subsection, we will fix the level at $m=1$. The corresponding set of bounded partitions $\Lambda^{(n,1)}$  consists of columns of the form
\begin{equation}\label{columns}
(1^k)=(\underbrace{1,\dots,1}_{k},\underbrace{0,\dots,0}_{n-k}),\quad 0\leq k\leq n
\end{equation}
(cf. Eq. \eqref{bounded-partitions}). Upon extending $\rho_j=(n-j)\mathrm{g}+\mathrm{g}_1$ \eqref{rho} to the indices $j=0$ and $j=n+1$ for convenience, we  list the non-vanishing coefficients of $H$ \eqref{Hamiltonian}--\eqref{C-r} in this situation:
\begin{subequations}
\begin{align}\label{A-k}
A_{(1^k)}=& \sum_{1\leq r\leq 4}\mathrm{c}_r\Bigl( 
\prod_{1\leq j\leq k} 
\tfrac{[\rho_j+\frac{3}{2}-\mathrm{g},\rho_j+\frac{1}{2}+\mathrm{g}]_r}{[\rho_j+\frac{3}{2},\rho_j+\frac{1}{2}]_r}
\prod_{k<j\leq n}
\tfrac{[\rho_j+\frac{1}{2}-\mathrm{g},\rho_j-\frac{1}{2}+\mathrm{g}]_r}{[\rho_j+\frac{1}{2},\rho_j-\frac{1}{2}]_r}
-1\Bigr) \\
=& \sum_{1\leq r\leq 4}\mathrm{c}_r\Bigl( \tfrac{[\rho_0+\frac{1}{2},\rho_{k+1}+\frac{3}{2},\rho_k-\frac{1}{2},\rho_{n+1}+\frac{1}{2}]_r}{[\rho_k+\frac{1}{2},\rho_1+\frac{3}{2},\rho_n-\frac{1}{2},\rho_{k+1}+\frac{1}{2}]_r}-1\Bigr) \nonumber
\end{align}
with $0\leq k \leq n$,
\begin{align}\label{B-k+1}
B_{(1^k),k+1}=& \prod_{1\leq r\leq 4}\tfrac{[\rho_{k+1}+\mathrm{g}_r,\rho_{k+1}+\mathrm{g}'_r+\frac{1}{2}]_r}{[\rho_{k+1},\rho_{k+1}+\frac{1}{2}]_r}
\prod_{\substack{1\leq j\leq k\\\delta=\pm 1}} \tfrac{[\rho_{k+1}+\delta\rho_j+\delta+\mathrm{g}]_1 }{[\rho_{k+1}+\delta\rho_j+\delta]_1} 
\prod_{\substack{k+1<j\leq n\\ \delta=\pm 1}} \tfrac{[\rho_{k+1}+\delta\rho_j+\mathrm{g}]_1 }{[\rho_{k+1}+\delta\rho_j]_1}  
\\
=&\tfrac{[\rho_0+\rho_{k+1}+1,2\rho_{k+1},\rho_{k+1}-\rho_{n+1},1]_1}{[\rho_k+\rho_{k+1}+1,\rho_1-\rho_{k+1}+1,\rho_{k+1}+\rho_n,\mathrm{g}]_1}\prod_{1\leq r\leq 4}\tfrac{[\rho_{k+1}+\mathrm{g}_r,\rho_{k+1}+\mathrm{g}'_r+\frac{1}{2}]_r}{[\rho_{k+1},\rho_{k+1}+\frac{1}{2}]_r} \nonumber
\end{align}
with $0\leq k < n$, and
\begin{align}\label{B-k}
B_{(1^k),-k}=&
\prod_{1\leq r\leq 4}\tfrac{[\rho_k+1-\mathrm{g}_r,\rho_k-\mathrm{g}'_r+\frac{1}{2}]_r}{[\rho_k+1,\rho_k+\frac{1}{2}]_r} 
\prod_{\substack{1\leq j< k\\\delta=\pm 1}} 
\tfrac{[\rho_{k}+\delta\rho_j+1+\delta-\mathrm{g}]_1 }{[\rho_{k}+\delta\rho_j+1+\delta]_1} 
\prod_{\substack{k<j\leq n\\ \delta=\pm 1}} \tfrac{[\rho_{k}+\delta\rho_j+1-\mathrm{g}]_1 }{[\rho_{k}+\delta\rho_j+1]_1}
\\
=&\tfrac{[2\rho_{k+1}+2,\rho_0-\rho_k,\rho_k+\rho_{n+1}+1,1]_1}{[\rho_1+\rho_k+2,\rho_k+\rho_{k+1}+1,\rho_k-\rho_n+1,\mathrm{g}]_1}
\prod_{1\leq r\leq 4}\tfrac{[\rho_k+1-\mathrm{g}_r,\rho_k-\mathrm{g}'_r+\frac{1}{2}]_r}{[\rho_k+1,\rho_k+\frac{1}{2}]_r} \nonumber
\end{align}
\end{subequations}
with $0< k \leq n$.  The eigenvalue problem for $H$ \eqref{Hamiltonian2} in
$\mathcal{C}(\Lambda^{(n,1)})$ thus takes the following tridiagonal form:
\begin{equation}\label{recrel}
B_{-k}f_{(1^{k-1})}+A_k f_{(1^k)} +B_{k+1}f_{(1^{k+1})}=Ef_{(1^k)} ,\quad 0\leq k\leq n ,
\end{equation}
where we have employed the short-hand notation $A_k=A_{(1^k)}$ \eqref{A-k}, $B_{k+1}=B_{(1^k),k+1}$  \eqref{B-k+1}, $B_{-k}=B_{(1^k),-k}$  \eqref{B-k} with the convention that $B_0=B_{n+1}=0$.
With the aid of the recurrence in Lemma \ref{lemma:recurrence}, we can now express the corresponding weight function $\Delta\in\mathcal{C}(\Lambda^{(n,1)})$ \eqref{weights} in terms of the coefficients as follows:
\begin{align}\label{weights3}
\Delta_{(1^k)}=&  \prod_{1\leq j\leq k}\tfrac{[2\rho_j+2]_1}{[2\rho_j]_1}\prod_{1\leq r\leq 4}\tfrac{[\rho_j+\mathrm{g}_r,\rho_j+\mathrm{g}'_r+\frac{1}{2}]_{r}}{[\rho_j+1-\mathrm{g}_r,\rho_j-\mathrm{g}'_r+\frac{1}{2}]_{r}}\\
&\times \prod_{1\leq j<l\leq k}\tfrac{[\rho_j+\rho_l+2]_1}{[\rho_j+\rho_l]_1}
\tfrac{[\rho_j+\rho_l+\mathrm{g}]_{1}[\rho_j+\rho_l+1+\mathrm{g}]_{1}}{[\rho_j+\rho_l+1-\mathrm{g}]_{1}[\rho_j+\rho_l+2-\mathrm{g}]_{1}}   \nonumber \\
&\times \prod_{\substack{1\leq j\leq k \\ k<l\leq n}}\tfrac{[\rho_j+\rho_l+1]_1}{[\rho_j+\rho_l]_1}
\tfrac{[\rho_j+\rho_l+\mathrm{g}]_{1}}{[\rho_j+\rho_l+1-\mathrm{g}]_{1}}
\tfrac{[\rho_j-\rho_l+1]_1}{[\rho_j-\rho_l]_1}
\tfrac{[\rho_j-\rho_l+\mathrm{g}]_{1}}{[\rho_j-\rho_l+1-\mathrm{g}]_{1}}   \nonumber \\
=&\prod_{1\leq j\leq k}B_jB_{-j}^{-1},\quad 0\leq k\leq n  \nonumber
\end{align}
(so $\Delta_{(1^0)}=1$).

To construct the eigenfunctions let us define $P_{(1^0)}(E)=1$ and
\begin{subequations}
\begin{equation}\label{level1polynomials}
P_{(1^k)}(E)=\det\begin{bmatrix}
E-A_0&-B_1&0&\cdots&0\\
-B_{-1}&E-A_1&\ddots&&\vdots\\
0&-B_{-2}&\ddots&-B_{k-2}&0\\
\vdots&&\ddots&E-A_{k-2}&-B_{k-1}\\
0&\cdots&0&-B_{-k+1}&E-A_{k-1}
\end{bmatrix}
\end{equation}
for $k=1,\dots,n+1$ (with $(1^{n+1})\in \Lambda^{(n+1,1)}$). Notice that $P_{(1^k)}(E)$ is a monic polynomial of degree $k$ in $E$. Upon expanding the determinant for $P_{(1^{k+1})}(E)$ along its last row/column, it is seen that the polynomials in question obey the three-term recurrence relation
\begin{equation}\label{recrel3}
P_{(1^{k+1})}(E)=(E-A_k)P_{(1^{k})}(E)-B_kB_{-k}P_{(1^{k-1})}(E),\quad 0\leq k\leq n.
\end{equation}
\end{subequations}

\begin{theorem}[Diagonalization for \emph{m}=1]\label{theorem:level1}
The following statements hold for parameters subject to the conditions in Eqs. \eqref{coupling-conditions}, \eqref{truncation} with $m=1$ and $\mathrm{g}\neq 1$.

\begin{subequations}
\emph{(i)} The eigenvalues from Proposition \ref{proposition:simple-spectrum} are given by the simple roots
\begin{equation}
E_{(1^0)}>E_{(1^1)}> \cdots > E_{(1^n)}
\end{equation}
of $P_{(1^{n+1})}(E)$ \eqref{level1polynomials}.

\emph{(ii)} The eigenfunctions $h^{(1^l)}$, $0\leq l\leq n$ from Theorem \ref{theorem:diagonalization} are given by
\begin{equation}
 h_{(1^k)}^{(1^l)}= \frac{\textsc{c}_{(1^k)}}{ \textsc{n}_{(1^l)} }  P_{(1^k)}(E_{(1^l)})\qquad (0\leq k,l\leq n)
\end{equation}
with
\begin{align}
\textsc{c}_{(1^k)}= & \prod_{\substack{1\leq j\leq k\\1\leq r\leq 4}}\tfrac{[\rho_j,\rho_j+\frac{1}{2}]_{r}}{[\rho_j+\mathrm{g}_r,\rho_j+\mathrm{g}'_r+\frac{1}{2}]_{r}}
\prod_{1\leq j<l\leq k}\tfrac{[\rho_j+\rho_l]_{1}[\rho_j+\rho_l+1]_{1}}{[\rho_j+\rho_l+\mathrm{g}]_{1}[\rho_j+\rho_l+1+\mathrm{g}]_{1}}  \\
&\times \prod_{\substack{1\leq j\leq k\\ k<l\leq n}}  \tfrac{[\rho_j+\rho_l]_{1}[\rho_j-\rho_l]_{1}}{[\rho_j+\rho_l+\mathrm{g}]_{1}[\rho_j-\rho_l+\mathrm{g}]_{1}} \nonumber \\
=& \prod_{1\leq j\leq k}B_j^{-1} \nonumber
\end{align}
(so $\textsc{c}_{(1^0)}=1$) and
\begin{align}
\textsc{n}_{(1^l)} &= \sum_{0\leq k\leq n}   \textsc{c}_{(1^k)}^2  P_{(1^k)}^2(E_{(1^l)}) \Delta_{(1^k)} \\
&= \textsc{c}_{(1^n)}^2\Delta_{(1^n)} P_{(1^n)}\bigl( E_{(1^l)}\bigr)
\prod_{\substack{0\leq j\leq n\\j\neq l}}\left(E_{(1^l)}-E_{(1^j)}\right) .\nonumber
\end{align}
\end{subequations}
\end{theorem}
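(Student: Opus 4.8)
The plan is to recognize the restriction of $H$ \eqref{Hamiltonian2} to $\mathcal{C}(\Lambda^{(n,1)})$, displayed in tridiagonal form \eqref{recrel}, as a finite Jacobi matrix and to invoke the classical spectral theory of the associated orthogonal polynomials. Throughout, Lemma \ref{lemma:truncation} ensures that the off-diagonal entries $B_k=B_{(1^{k-1}),k}$ and $B_{-k}=B_{(1^k),-k}$ are strictly positive for $1\le k\le n$ and all $p\in(-1,1)$, so that $B_kB_{-k}>0$; comparing \eqref{recrel} with \eqref{level1polynomials} shows that $P_{(1^{n+1})}(E)$ is precisely the characteristic polynomial $\det(E\,\mathrm{Id}-H)$ of the restricted operator. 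For part (i), self-adjointness of $H$ (Proposition \ref{proposition:self-adjointness}) forces the roots of $P_{(1^{n+1})}$ to be real, while the Jacobi structure makes them simple: equation \eqref{recrel} determines every $f_{(1^k)}$ from $f_{(1^0)}$ through the nonvanishing coefficients $B_{k+1}$, so $f_{(1^0)}=0$ implies $f\equiv 0$ and each eigenspace is one-dimensional. To pin down the labeling I would check the ordering at $p=0$ from the explicit eigenvalues of Proposition \ref{proposition:simple-spectrum}: the consecutive difference $E_{(1^l)}\vert_{p=0}-E_{(1^{l+1})}\vert_{p=0}=4\sin(\tfrac{\alpha}{2})\sin\!\bigl(\tfrac{\alpha}{2}(2\hat\rho_{l+1}+1)\bigr)$ is positive because the coupling inequalities \eqref{coupling-conditions} place $\tfrac{\alpha}{2}(2\hat\rho_{l+1}+1)$ in $(0,\pi)$; since the simple branches never collide, this strict ordering $E_{(1^0)}>\cdots>E_{(1^n)}$ persists for all $p\in(-1,1)$.

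For part (ii) I would set $Q_k(E)=\textsc{c}_{(1^k)}P_{(1^k)}(E)$ and, using $\textsc{c}_{(1^k)}/\textsc{c}_{(1^{k-1})}=B_k^{-1}$ in the three-term recurrence \eqref{recrel3}, verify the eigenvalue recurrence $B_{-k}Q_{k-1}+A_kQ_k+B_{k+1}Q_{k+1}=EQ_k$. The bottom boundary ($k=0$) holds automatically, whereas truncating at the top ($k=n$, dropping the $B_{n+1}$ term) amounts to $P_{(1^{n+1})}(E)=0$; hence exactly for $E=E_{(1^l)}$ the vector $\bigl(\textsc{c}_{(1^k)}P_{(1^k)}(E_{(1^l)})\bigr)_{k=0}^{n}$ is an eigenvector. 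By the one-dimensionality from (i), the eigenfunction $h^{(1^l)}$ of Theorem \ref{theorem:diagonalization} is a real multiple $C_l$ of this vector, so $h^{(1^l)}_{(1^0)}=C_l$ and $\langle h^{(1^l)},h^{(1^l)}\rangle_\Delta=C_l^2\,\textsc{n}_{(1^l)}$ with $\textsc{n}_{(1^l)}=\sum_k\textsc{c}_{(1^k)}^2P_{(1^k)}^2(E_{(1^l)})\Delta_{(1^k)}$. Comparison with the normalization $\langle h^{(1^l)},h^{(1^l)}\rangle_\Delta=h^{(1^l)}_{(1^0)}$ of Theorem \ref{theorem:diagonalization}(ii) gives $C_l=1/\textsc{n}_{(1^l)}$, which is the claimed formula; since $\textsc{n}_{(1^l)}>0$ one also gets $h^{(1^l)}_{(1^0)}>0$ for every $p$, so here the genericity hypotheses are automatic and there is no exceptional $p$.

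For the product formula I would invoke the purely algebraic confluent Christoffel--Darboux identity attached to \eqref{recrel3}. With $h_k=\prod_{j=1}^{k}B_jB_{-j}$, the factorizations $\textsc{c}_{(1^k)}=\prod_{j=1}^kB_j^{-1}$ and $\Delta_{(1^k)}=\prod_{j=1}^kB_jB_{-j}^{-1}$ yield the key identity $\textsc{c}_{(1^k)}^2\Delta_{(1^k)}=1/h_k$, so that $\textsc{n}_{(1^l)}=\sum_{k=0}^n P_{(1^k)}^2(E_{(1^l)})/h_k$. The confluent Christoffel--Darboux formula rewrites this sum as $h_n^{-1}\bigl(P_{(1^{n+1})}'(E_{(1^l)})P_{(1^n)}(E_{(1^l)})-P_{(1^n)}'(E_{(1^l)})P_{(1^{n+1})}(E_{(1^l)})\bigr)$; the last product vanishes since $E_{(1^l)}$ is a root of $P_{(1^{n+1})}$, and monicity gives $P_{(1^{n+1})}'(E_{(1^l)})=\prod_{j\neq l}(E_{(1^l)}-E_{(1^j)})$. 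Substituting $h_n^{-1}=\textsc{c}_{(1^n)}^2\Delta_{(1^n)}$ then reproduces the asserted expression for $\textsc{n}_{(1^l)}$.

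The individual manipulations are standard Jacobi-matrix and orthogonal-polynomial computations, so the genuine difficulty is bookkeeping rather than conceptual. The two steps demanding the most care are matching the analytically continued labels $E_\nu$ of Proposition \ref{proposition:simple-spectrum} to the ordered roots of $P_{(1^{n+1})}$ (the $p=0$ ordering estimate above together with the non-crossing argument), and tracking every constant through the Christoffel--Darboux evaluation, above all the identity $\textsc{c}_{(1^k)}^2\Delta_{(1^k)}=1/h_k$ that converts the defining sum for $\textsc{n}_{(1^l)}$ into the standard Christoffel kernel.
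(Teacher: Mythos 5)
Your proposal is correct and follows essentially the same route as the paper's proof: identifying $P_{(1^{n+1})}$ as the characteristic polynomial of the tridiagonal restriction, deducing real simple eigenvalues from the positivity of the off-diagonal entries, fixing the labeling via the $p=0$ ordering and the absence of level crossings, building eigenvectors from the three-term recurrence with $\textsc{c}_{(1^k)}=\prod_{j\le k}B_j^{-1}$, and evaluating $\textsc{n}_{(1^l)}$ by the confluent Christoffel--Darboux identity using $\textsc{c}_{(1^k)}^2\Delta_{(1^k)}=\prod_{j\le k}(B_jB_{-j})^{-1}$. Your explicit verification of the strict ordering at $p=0$ and the observation that the genericity hypotheses are automatic here are welcome additions of detail, not deviations.
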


\begin{proof}

\emph{(i)} It is immediate from the eigenvalue equation in Eq. \eqref{recrel} that
$P_{(1^{n+1})}(E)$ \eqref{level1polynomials} produces the characteristic polynomial of $H$ \eqref{Hamiltonian2} in the present situation.
The corresponding eigenvalues are thus given by the roots of $P_{(1^{n+1})}(E)$. Since the off-diagonal elements of the underlying real tridiagonal matrix are positive by Lemma \ref{lemma:truncation}, the eigenvalues in question are real and simple 
(cf. e.g.  \cite[Chapter III.11.4]{pra:problems}). From the analyticity  in $-1<p<1$ and the explicit eigenvalues at $p=0$ given by
\begin{equation*}
E_{(1^l)}\vert_{p=0}=
2\sum_{1\leq j\leq l} {\textstyle \cos\Bigl( \frac{\pi (1+\hat\rho_j)}{1+(n-1)\mathrm{g}+\mathrm{g}_1+\mathrm{g}_2}\Bigr)}
+ 2\sum_{l< j\leq n} {\textstyle \cos\Bigl( \frac{\pi \hat\rho_j}{1+(n-1)\mathrm{g}+\mathrm{g}_1+\mathrm{g}_2}\Bigr)} 
\end{equation*}
(cf. Eq. \eqref{trig-eigenvalues}),
it is clear that Proposition \ref{proposition:simple-spectrum} labels the simple eigenvalues for $m=1$ in the decreasing order as indicated (since  level-crossings cannot occur when deforming the multiplicity-free spectrum away from $p=0$).

\emph{(ii)} It follows from the recurrence in Eq. \eqref{recrel3} that $f_k= \textsc{c}_{(1^k)}  P_{(1^k)}(E)$
solves the eigenvalue equation \eqref{recrel} provided $E$ is a root of $P_{(1^{n+1})}(E)$. Since $f_0=1$, this yields
a nontrivial eigenfunction for each eigenvalue. In order to normalize in accordance with Theorem \ref{theorem:diagonalization},
it suffices to divide the eigenfunction for $E_{(1^l)}$ by
\begin{equation*}
\textsc{n}_{(1^l)}= \sum_{0\leq k\leq n}   \textsc{c}_{(1^k)}^2  P_{(1^k)}^2(E_{(1^l)}) \Delta_{(1^k)} .
\end{equation*}
By invoking the Christoffel-Darboux formula (cf. the note below for some additional details), we find that
\begin{equation*}
   \textsc{n}_{(1^l)} =  \sum_{0\leq k\leq n} 
   \tfrac{P_{(1^k)}^2(E_{(1^l)})}{\prod_{1\leq j\leq k}B_jB_{-j}}
=\tfrac{P_{(1^n)}(E_{(1^l)})}{\prod_{1\leq j\leq n}B_jB_{-j}}\prod_{\substack{0\leq j\leq n\\j\neq l}}(E_{(1^l)}-E_{(1^j)})
\end{equation*}
as stated.
\end{proof}

\begin{note} The polynomials $P_{(1^k)}(E)$ \eqref{level1polynomials} satisfy
the Christoffel-Darboux identity \cite{sze:orthogonal}
\begin{equation*}
\sum_{0\leq k\leq n}\tfrac{P_{(1^k)}(x)P_{(1^k)} (y)}{\prod_{1\leq j\leq k}B_jB_{-j}}=\tfrac{P_{(1^{n+1})}(x) P_{(1^n)}(y)-P_{(1^n)}(x)P_{(1^{n+1})}(y)}{(x-y)\prod_{1\leq j\leq n}B_jB_{-j}}
\end{equation*}
and its $y\to x$ confluent limit
\begin{equation*}
\sum_{0\leq k\leq n}\tfrac{P_{(1^k)}^2(x)}{\prod_{1\leq j\leq k}B_jB_{-j}}=\tfrac{P'_{(1^{n+1})}(x)P_{(1^n)}(x)-P'_{(1^n)}(x)P_{(1^{n+1})}(x)}{\prod_{1\leq j\leq n}B_jB_{-j}}.
\end{equation*}
Indeed, the three-term recurrence relation in Eq. \eqref{recrel3} establishes that
\begin{multline*}
P_{(1^{n+1})}(x)P_{(1^n)}(y)-P_{(1^n)}(x)P_{(1^{n+1})}(y)=\\(x-y)P_{(1^n)}(x)P_{(1^n)}(y)
+B_nB_{-n}(P_{(1^n)}(x)P_{(1^{n-1})}(y)-P_{(1^{n-1})}(x)P_{(1^n)}(y)) ,
\end{multline*}
which recovers the Christoffel-Darboux identity by iteration:
\begin{align*}
P_{(1^{n+1})}(x)P_{(1^n)}(y) & -P_{(1^n)}(x)P_{(1^{n+1})}(y)= \\
&(x-y)\sum_{0\leq k\leq n}\Big(P_{(1^k)}(x)P_{(1^k)}(y)\prod_{k< j\leq n}B_jB_{-j}\Big)
\end{align*}
(upon dividing both sides by $(x-y)\prod_{1\leq j\leq n}B_jB_{-j}$).  The computation of the quadratic norms
in (the last formula of) the proof
of Theorem \ref{theorem:level1} hinges on the confluent Christoffel-Darboux identity evaluated at $x=E_{(1^l)}$,
in combination with the factorization of the characteristic polynomial over the eigenvalues:
\begin{equation*}
P_{(1^{n+1})}(x)=(x-E_{(1^0)})(x-E_{(1^1)})\cdots (x-E_{(1^n)}) .
\end{equation*}
\end{note}

\subsection{The case g=1: generalized Schur polynomials}
\label{subsec:5.2}

In the limit $\mathrm{g}\to 1$, the coefficient $A_\lambda$ \eqref{A-lambda} remains regular:
\begin{subequations}
\begin{equation}
A_\lambda \vert_{\mathrm{g}\to 1}= 
\sum_{1\leq r\leq 4} \text{Res}_{\mathrm{g}=1}(\mathrm{c}_r)\sum_{1\leq j\leq n}\bigg(\tfrac{[\rho_j+\lambda_j+\frac{1}{2}]'_r}{[\rho_j+\lambda_j+\frac{1}{2}]_r}-\tfrac{[\rho_j+\lambda_j-\frac{1}{2}]'_r}{[\rho_j+\lambda_j-\frac{1}{2}]_r}\bigg),
\end{equation}
with
\begin{equation}
 \text{Res}_{\mathrm{g}=1}(\mathrm{c}_r)=\tfrac{2}{[1]_1}\prod_{1\leq s\leq 4}[\mathrm{g}_{\pi_r(s)}-\tfrac{1}{2}]_s[\mathrm{g}'_{\pi_r(s)}]_s.
\end{equation}
\end{subequations}
Here $[z]'_r$ denotes the derivative of the scaled theta function $[z]_r$  \eqref{thetas} with respect to the variable $z$. Moreover, the coefficients $B_{\lambda,\varepsilon j}$ \eqref{B-lambda}
and the orthogonality weights  $\Delta_\lambda$ \eqref{Delta2} simplify at this value of the coupling parameter $\mathrm{g}$:
\begin{subequations}
\begin{equation}
B_{\lambda,\varepsilon j} \vert_{\mathrm{g}\to 1}
=
\tfrac{V_{\lambda+\varepsilon e_j}}{V_\lambda}
\prod_{1\leq r\leq 4}\tfrac{[\rho_j+\lambda_j+\varepsilon\mathrm{g}_r]_r[\rho_j+\lambda_j+\varepsilon(\mathrm{g}'_r+\frac{1}{2})]_r}{[\rho_j+\lambda_j]_r[\rho_j+\lambda_j+\frac{\varepsilon}{2}]_r}
\end{equation}
and
\begin{equation}
\Delta_\lambda  \vert_{\mathrm{g}\to 1}=  V_\lambda^2 \prod_{1\leq j\leq n}
\tfrac{[2\rho_j+2\lambda_j]_1}{[2\rho_j]_1}\prod_{1\leq r\leq 4}\tfrac{[\rho_j+\mathrm{g}_r,\rho_j+\mathrm{g}'_r+\frac{1}{2}]_{r,\lambda_j}}{[\rho_j+1-\mathrm{g}_r,\rho_j-\mathrm{g}'_r+\frac{1}{2}]_{r,\lambda_j}} 
\end{equation}
with
\begin{equation}
V_\lambda=\prod_{\substack{1\leq j<k\leq n\\\delta=\pm 1}}\tfrac{[\rho_j+\delta\rho_k+\lambda_j+\delta\lambda_k]_1}{[\rho_j+\delta\rho_k]_1} .
\end{equation}
\end{subequations}

We learn from these formulas that
\begin{subequations}
\begin{equation}\label{A-lambda-g=1:a}
A_\lambda  \vert_{\mathrm{g}\to 1} = \sum_{1\leq j\leq n}a_{n-j+\lambda_j}
\end{equation}
with
\begin{equation}\label{A-lambda-g=1:b}
a_k=\sum_{1\leq r\leq 4} \text{Res}_{\mathrm{g}=1}(\mathrm{c}_r)
\bigg(\tfrac{[\mathrm{g}_1+k+\frac{1}{2}]'_r}{[\mathrm{g}_1+k+\frac{1}{2}]_r}-\tfrac{[\mathrm{g}_1+k-\frac{1}{2}]'_r}{[\mathrm{g}_1+k-\frac{1}{2}]_r}\bigg) ,
\end{equation}
and
\begin{equation}\label{B-lambda-g=1:a}
 B_{\lambda,j} \vert_{\mathrm{g}\to 1}=\tfrac{V_{\lambda+ e_j}}{V_\lambda}  b_{n-j+\lambda_j}^{+},\quad
B_{\lambda,-j} \vert_{\mathrm{g}\to 1}=\tfrac{V_{\lambda -e_j}}{V_\lambda} b_{n-j+\lambda_j}^{-}
\end{equation}
with
\begin{equation}\label{B-lambda-g=1:b}
b_{k}^{+}=\prod_{1\leq r\leq 4}\tfrac{[\mathrm{g}_1+\mathrm{g}_r+k]_r[\mathrm{g}_1+\mathrm{g}'_r+\frac{1}{2}+k]_r}{[\mathrm{g}_1+k]_r[\mathrm{g}_1+\frac{1}{2}+k]_r},\quad
b_{k}^{-}=\prod_{1\leq r\leq 4}\tfrac{[\mathrm{g}_1-\mathrm{g}_r+k]_r[\mathrm{g}_1-\mathrm{g}'_r-\frac{1}{2}+k]_r}{[\mathrm{g}_1+k]_r[\mathrm{g}_1-\frac{1}{2}+k]_r} 
\end{equation}
(so $b_0^-=b_{n+m-1}^+=0$). 
\end{subequations}
Moreover, upon indicating the dependence on $n$ and $m$ explicitly, the elliptic weights can be factorized correspondingly:
\begin{subequations}
\begin{equation}\label{weights:g=1}
\Delta^{(n,m)}_\lambda  \vert_{\mathrm{g}\to 1}=  V_\lambda^2 \prod_{1\leq j\leq n} \tfrac{\Delta^{(1,n+m-1)}_{n-j+\lambda_j}}{\Delta^{(1,n+m-1)}_{n-j}}
\end{equation}
with
\begin{align}
\Delta^{(1,n+m-1)}_k =& \tfrac{[2\mathrm{g}_1+2k]_1}{[2\mathrm{g}_1]_1}\prod_{1\leq r\leq 4}\tfrac{[\mathrm{g}_1+\mathrm{g}_r,\mathrm{g}_1+\mathrm{g}'_r+\frac{1}{2}]_{r,k}}{[\mathrm{g}_1+1-\mathrm{g}_r,\mathrm{g}_1-\mathrm{g}'_r+\frac{1}{2}]_{r,k}} \quad (0\leq k <n+m), \\
=&   \prod_{0\leq j< k} b_j^+ (b_{j+1}^-)^{-1}   . \nonumber
\end{align}
\end{subequations}

For $0\leq k\leq n+m$, let us now define the following monic polynomials
\begin{subequations}
\begin{equation}\label{E-Racah}
P_{k}(E)=\det\begin{bmatrix}
E-a_0&-{b}_{0}^+&0&\cdots&0\\
-b_{1}^-&E-a_1&\ddots&&\vdots\\
0&-b_{2}^-&\ddots&-{b}_{k-3}^+&0\\
\vdots&&\ddots&E-a_{k-2}&-{b}_{k-2}^+\\
0&\cdots&0&-b_{k-1}^-&E-a_{k-1}
\end{bmatrix}
\end{equation}
(so $P_{0}(E)=1$), which obey the three-term recurrence
\begin{equation}\label{E-Racah-rec}
P_{k+1}(E)=(E-a_k)P_{k}(E)-b_{k-1}^+ b_{k}^- P_{k-1}(E),\quad 0\leq k < n+m .
\end{equation}
\end{subequations}
The polynomials in question amount to the elliptic Racah basis of \cite{die-gor:racah} on  $n+m$ nodes. 
From \cite[Proposition 7]{die-gor:racah}, one extracts the following (dual) orthogonality relation for
$P_0(E),P_1(E),\ldots ,P_{n+m-1}(E)$:

\begin{subequations}
\begin{equation}\label{orthogonality2}
\sum_{0\leq k<n+m} \textsc{c}_k^2 P_k(E_j)P_k(E_l)\Delta^{(n+m-1)}_k=\begin{cases}\textsc{n}_l&\text{if}\ j=l,\\0&\text{if}\ j\neq l,\end{cases}
\end{equation}
where
\begin{equation}\label{E-racah:roots}
E_0>E_1>\cdots > E_{n+m-1}
\end{equation}
denote the roots of the top-degree elliptic Racah polynomial $P_{n+m}(E)$, and the normalizations are governed by
\begin{equation}\label{ck}
\textsc{c}_k=\prod_{0\leq j <k}  (b_j^+)^{-1} =
 \prod_{ 1\leq r\leq 4}   \tfrac{[\mathrm{g}_1,\mathrm{g}_1+\frac{1}{2}]_{r,k}}{[\mathrm{g}_1+\mathrm{g}_r,\mathrm{g}_1+\mathrm{g}'_r+\frac{1}{2}]_{r,k}}
\end{equation}
and
\begin{align}
\textsc{n}_l=&\frac{P_{n+m-1}( E_l)}{\prod_{1\leq k<n+m}b_{k-1}^+ b_k^-}\prod_{\substack{0\leq j<n+m\\j\neq l}} ( E_l-E_j )  \\
=&\textsc{c}_{n+m-1}^2 \Delta^{(1,n+m-1)}_{n+m-1} P_{n+m-1}(E_l)  \prod_{\substack{0\leq j<n+m\\j\neq l}} ( E_l-E_j )  . \nonumber
\end{align}
\end{subequations}
Notice that it is straightforward to check these orthogonality relations from scratch via the Christoffel-Darboux formula by adapting the corresponding computations in Subsection \ref{subsec:5.1}.

\begin{theorem}[Diagonalization for g=1]\label{theorem:g=1}
The following statements hold for parameters subject to the conditions in Eqs. \eqref{coupling-conditions}, \eqref{truncation}.

\begin{subequations}
\emph{(i)}  For $\mathrm{g}\to 1$, the eigenvalues from Proposition \ref{proposition:simple-spectrum} can be written explicitly in terms of the roots $E_l$ \eqref{E-racah:roots}
of the top-degree elliptic Racah polynomial $P_{n+m}(E)$ \eqref{E-Racah}, \eqref{E-Racah-rec}:
\begin{equation}\label{eigenvalues:g=1}
E_\nu=\sum_{1\leq j\leq n} E_{n-j+\nu_j}\qquad (\nu\in\Lambda^{(n,m)}) .
\end{equation}

\emph{(ii)} For $\mathrm{g}\to 1$ an orthogonal basis of eigenfunctions $h^{(\nu )}$, $\nu\in\Lambda^{(n,m)}$ such that $H h^{(\nu )} =E_\nu h^{(\nu )}$ and
$h^{(\nu )}_0= \langle h^{(\nu )}, h^{(\nu )}\rangle_\Delta$, can be written down explicitly in terms of the
generalized Schur polynomials associated with the elliptic Racah basis:
\begin{equation}\label{eigenfunctions:g=1}
 h_{\lambda }^{(\nu)}= \tfrac{\textsc{c}_{\lambda}}{ \textsc{n}_{\nu} }  \textsc{s}^{(\nu)}_\lambda \qquad (\lambda ,\nu\in\Lambda^{(n,m)}) 
\end{equation}
with
\begin{equation}\label{schur}
\textsc{s}^{(\nu)}_\lambda=\textsc{a}^{(\nu)}_\lambda / \textsc{a}^{(\nu)}_0, \quad \textsc{a}^{(\nu)}_\lambda=\det\big[P_{n-i+\lambda_i}(E_{n-j+\nu_j})\big]_{1\leq i,j\leq n} .
\end{equation}
Here the normalizations are governed by
\begin{equation}\label{C-lambda:g=1}
\textsc{c}_{\lambda}=\tfrac{1}{V_\lambda} \prod_{\substack{1\leq j\leq n\\ 1\leq r\leq 4}}\tfrac{[\rho_j,\rho_j+\frac{1}{2}]_{r,\lambda_j}}{[\rho_j+\mathrm{g}_r,\rho_j+\mathrm{g}'_r+\frac{1}{2}]_{r,\lambda_j}} 
\end{equation}
and
\begin{equation}\label{N:g=1}
\textsc{n}_{\nu } =\sum_{\lambda\in\Lambda^{(n,m)}}
\textsc{c}_\lambda^2 (\textsc{s}^{(\nu)}_\lambda)^2  \Delta_\lambda  =
\frac{1}{ \bigl( \textsc{a}^{(\nu)}_0\bigr)^2}  \prod_{1\leq j\leq n} \frac{\textsc{n}_{n-j+\nu_j}}{\Delta^{(1,n+m-1)}_{n-j} \textsc{c}^2_{n-j} } . 
\end{equation}
\end{subequations}
\end{theorem}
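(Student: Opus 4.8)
The plan is to exploit the \emph{fermionic} (free-particle) structure that emerges at $\mathrm{g}=1$. The $\mathrm{g}\to 1$ limit formulas \eqref{A-lambda-g=1:a}--\eqref{B-lambda-g=1:b} and \eqref{weights:g=1} show that, after trading each partition $\lambda\in\Lambda^{(n,m)}$ for the strictly decreasing \emph{positions} $x_j=n-j+\lambda_j$ (so that $n+m-1\geq x_1>\cdots>x_n\geq 0$), the coefficients of $H$, the weight $\Delta_\lambda$ and the prefactor $\textsc{c}_\lambda$ \eqref{C-lambda:g=1} all factorize through the single-particle data $a_k,b_k^\pm$ and $\Delta^{(1,n+m-1)}_k$ carrying the elliptic Racah polynomials $P_k(E)$ of \eqref{E-Racah}--\eqref{E-Racah-rec}. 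Concretely I would first record the factorization $\textsc{c}_\lambda=V_\lambda^{-1}\prod_{1\le j\le n}\textsc{c}_{x_j}/\textsc{c}_{n-j}$ (with $\textsc{c}_k$ as in \eqref{ck}), which is immediate from $\rho_j=(n-j)+\mathrm{g}_1$ at $\mathrm{g}=1$, together with the two telescoping relations
\begin{equation*}
B_{\lambda,j}\,\textsc{c}_{\lambda+e_j}=\textsc{c}_\lambda,\qquad B_{\lambda,-j}\,\textsc{c}_{\lambda-e_j}=b^+_{x_j-1}b^-_{x_j}\,\textsc{c}_\lambda ,
\end{equation*}
which follow by cancelling the Vandermonde-type factors $V_{\lambda\pm e_j}/V_\lambda$ against $\textsc{c}_{x_j\pm1}/\textsc{c}_{x_j}$. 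This exhibits the ansatz $h^{(\nu)}_\lambda\propto\textsc{c}_\lambda\,\textsc{a}^{(\nu)}_\lambda$ with $\textsc{a}^{(\nu)}_\lambda=\det[P_{x_i}(E_{y_j})]_{1\le i,j\le n}$ and $y_j=n-j+\nu_j$ as a Slater determinant of single-particle elliptic Racah eigenvectors.

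Next I would verify the eigenvalue equation $Hh^{(\nu)}=E_\nu h^{(\nu)}$ together with the eigenvalue formula \eqref{eigenvalues:g=1} of part (i). Using the two telescoping relations above and $A_\lambda=\sum_j a_{x_j}$, the action of $H$ on $\textsc{c}_\lambda\textsc{a}^{(\nu)}_\lambda$ collapses, for each fixed $j$, to replacing row $j$ of $\textsc{a}^{(\nu)}_\lambda$ by the entries $a_{x_j}P_{x_j}(E_{y_l})+P_{x_j+1}(E_{y_l})+b^+_{x_j-1}b^-_{x_j}P_{x_j-1}(E_{y_l})$, which by the three-term recurrence \eqref{E-Racah-rec} equals $E_{y_l}P_{x_j}(E_{y_l})$. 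Summing over $j$ and invoking Jacobi's derivative-of-a-determinant formula (equivalently multilinearity in the rows) turns this into $\bigl(\sum_j E_{y_j}\bigr)\textsc{a}^{(\nu)}_\lambda=E_\nu\textsc{a}^{(\nu)}_\lambda$. The eigenvalue labelling against Proposition \ref{proposition:simple-spectrum} is then fixed by checking at $p=0$ that $\sum_j E_{n-j+\nu_j}\vert_{p=0}$ coincides with $2\sum_j[2(\hat\rho_j+\nu_j)]_{2,q}$ of \eqref{trig-eigenvalues}, which reduces to the single-particle ($n=1$) case since $\hat\rho_j+\nu_j$ equals the single-particle spectral variable at position $y_j$.

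For the orthogonality and the norm formula \eqref{N:g=1} I would pass to a Gram-determinant computation. The crucial simplification is that in $\textsc{c}_\lambda^2\Delta_\lambda$ the factor $V_\lambda^2$ cancels, leaving $\textsc{c}_\lambda^2\Delta_\lambda=\bigl(\prod_j\textsc{c}^2_{n-j}\Delta^{(1,n+m-1)}_{n-j}\bigr)^{-1}\prod_j w_{x_j}$ with single-particle weight $w_k=\textsc{c}_k^2\Delta^{(1,n+m-1)}_k$. Then $\sum_\lambda\textsc{c}_\lambda^2\,\textsc{a}^{(\nu)}_\lambda\textsc{a}^{(\tilde\nu)}_\lambda\Delta_\lambda$ becomes, via the Andr\'eief (discrete Cauchy--Binet) identity applied to the two determinants summed over $x_1>\cdots>x_n$, the determinant of the Gram matrix $\bigl[\sum_{k}P_k(E_{y_i})P_k(E_{\tilde y_j})w_k\bigr]_{1\le i,j\le n}$. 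The single-particle dual orthogonality \eqref{orthogonality2} makes this Gram matrix diagonal, equal to $\mathrm{diag}(\textsc{n}_{y_1},\dots,\textsc{n}_{y_n})$ when $\nu=\tilde\nu$ and possessing a zero column when $\nu\neq\tilde\nu$; this yields at once the orthogonality of the $h^{(\nu)}$ and the product formula \eqref{N:g=1}, while the normalization $h^{(\nu)}_0=\langle h^{(\nu)},h^{(\nu)}\rangle_\Delta$ follows from $\textsc{c}_0=\textsc{s}^{(\nu)}_0=1$. Since there result $\binom{n+m}{n}$ mutually orthogonal nonzero eigenfunctions, they constitute the asserted orthogonal eigenbasis.

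The main obstacle, and the place that needs genuine care rather than bookkeeping, is matching the \emph{truncated} summation range of $H$ \eqref{Hamiltonian2} with the \emph{unrestricted} row-operations used in the determinantal arguments. I would show that every shift $\lambda\pm e_j$ excluded by the truncation Lemma \ref{lemma:truncation} drops out automatically on the determinantal side: a move colliding the ordering ($\lambda_{j-1}=\lambda_j$ for $+e_j$, resp.\ $\lambda_j=\lambda_{j+1}$ for $-e_j$) makes two rows of $\textsc{a}^{(\nu)}_{\lambda\pm e_j}$ coincide, a move with $\lambda_1=m$ pushes a row index to $n+m$ where $P_{n+m}(E_{y_l})=0$ by \eqref{E-racah:roots}, and a move with $\lambda_n=0$ is annihilated by the vanishing boundary coefficient $b^-_0=0$ (and dually $b^+_{n+m-1}=0$). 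Once this compatibility is in place the free-fermion reduction is exact, and the remaining steps are the routine fermionization algebra sketched above.
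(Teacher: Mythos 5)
Your proposal is correct and follows essentially the same route as the paper's proof: the Slater-determinant ansatz $\textsc{c}_\lambda\textsc{a}^{(\nu)}_\lambda$ verified via the telescoping relations and the three-term recurrence \eqref{E-Racah-rec} by row expansion, and orthogonality via the Cauchy--Binet/Andr\'eief identity combined with the single-particle dual orthogonality \eqref{orthogonality2}. Your explicit treatment of the truncation boundary cases (coinciding rows, $P_{n+m}(E_{y_l})=0$, $b^-_0=b^+_{n+m-1}=0$) and your zero-row argument for $\nu\neq\tilde\nu$ are sound and merely repackage what the paper handles via its block-minor observation.
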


\begin{proof}
For any $\mu\in\Lambda^{(n)}$ let $\mu(i)\in\Lambda^{(n-1)}$, $i=1,\dots,n$ denote the partition obtained from $\mu$ by omitting its $i$-th part, i.e.
$\mu(1)=(\mu_2,\dots,\mu_n)$, $\mu(n)=(\mu_1,\dots,\mu_{n-1})$, and
$\mu(j)=(\mu_1,\dots,\mu_{j-1},\mu_{j+1},\dots,\mu_n)$ for $1<j<n$.
We then compute the action of $H\vert_{\mathrm{g}\to 1}$  on the function $ \textsc{c}_\lambda \textsc{s}^{(\nu)}_\lambda$ as follows:
\begin{align*}
&(H\vert_{\mathrm{g}\to 1} \textsc{c} \textsc{s}^{(\nu)})_\lambda\overset{\eqref{Hamiltonian2}}{=}A_\lambda\vert_{\mathrm{g}\to 1}\textsc{c}_\lambda \textsc{s}^{(\nu)}_\lambda+\sum_{\substack{1\leq i\leq n,\,\varepsilon=\pm 1\\\lambda+\varepsilon e_i\in\Lambda^{(n,m)}}}B_{\lambda,\varepsilon i} \vert_{\mathrm{g}\to 1}
\textsc{c}_{\lambda+\varepsilon e_i} \textsc{s}^{(\nu)}_{\lambda+\varepsilon e_i}\\
&\overset{\eqref{A-lambda-g=1:a}-\eqref{B-lambda-g=1:b}}{=}\\
&\textsc{c}_\lambda
\sum_{1\leq i\leq n}(a_{n-i+\lambda_i}\textsc{s}^{(\nu)}_\lambda +\tfrac{\textsc{c}_{\lambda -e_i}}{\textsc{c}_\lambda} \tfrac{V_{\lambda -e_i}}{V_\lambda} b^-_{n-i+\lambda_i}\textsc{s}^{(\nu)}_{\lambda-e_i}+\tfrac{\textsc{c}_{\lambda +e_i}}{\textsc{c}_\lambda}
\tfrac{V_{\lambda +e_i}}{V_\lambda}
b_{n-i+\lambda_i}^+\textsc{s}^{(\nu)}_{\lambda+e_i})
\end{align*}

\begin{equation*}
\begin{split}
&\overset{\eqref{C-lambda:g=1}}{=}
\textsc{c}_\lambda
\sum_{1\leq i\leq n}(a_{n-i+\lambda_i}\textsc{s}^{(\nu)}_\lambda +b^+_{n-i+\lambda_i-1} b^-_{n-i+\lambda_i}\textsc{s}^{(\nu)}_{\lambda-e_i}+\textsc{s}^{(\nu)}_{\lambda+e_i})\\
&\overset{\eqref{schur}}{=}\frac{\textsc{c}_\lambda}{\textsc{a}^{(\nu)}_0}
\sum_{1\leq i\leq n}(a_{n-i+\lambda_i}\textsc{a}^{(\nu)}_\lambda +b^+_{n-i+\lambda_i-1} b^-_{n-i+\lambda_i}\textsc{a}^{(\nu)}_{\lambda-e_i}+\textsc{a}^{(\nu)}_{\lambda+e_i})\\
&\overset{\eqref{E-Racah-rec}}{=}
\frac{\textsc{c}_\lambda}{\textsc{a}^{(\nu)}_0}
\sum_{1\leq i\leq n}\det\big[(\delta_{ik}E_{n-j+\nu_j}+1-\delta_{ik})P_{n-k+\lambda_k}(E_{n-j+\nu_j})\big]_{1\leq k,j\leq n}\\
&\overset{(\ast)}{=}
\frac{\textsc{c}_\lambda}{\textsc{a}^{(\nu)}_0}
\sum_{1\leq i\leq n}\sum_{1\leq j\leq n}(-1)^{i+j}E_{n-j+\nu_j}P_{n-i+\lambda_i}(E_{n-j+\nu_j})\textsc{a}^{(\nu(j))}_{\lambda(i)}\\
&=
\frac{\textsc{c}_\lambda}{\textsc{a}^{(\nu)}_0}
\sum_{1\leq j\leq n}E_{n-j+\nu_j}\sum_{1\leq i\leq n}(-1)^{i+j}P_{n-i+\lambda_i}(E_{n-j+\nu_j})\textsc{a}^{(\nu(j))}_{\lambda(i)}\\
&\overset{(\ast\ast)}{=}\frac{ \textsc{c}_\lambda}{\textsc{a}^{(\nu)}_0}\sum_{1\leq j\leq n}E_{n-j+\nu_j}\textsc{a}^{(\nu)}_{\lambda}\\
&\overset{\eqref{schur}}{=} \textsc{c}_\lambda \textsc{s}^{(\nu)}_\lambda \sum_{1\leq j\leq n}E_{n-j+\nu_j}.
\end{split}
\end{equation*}
Here $\delta_{ik}$ refers to Kronecker's delta symbol, and
in step ($\ast$) the determinant in the $i$-th term is expanded along its $i$-th row for $1\leq i\leq n$, while in step ($\ast\ast$) $\textsc{a}^{(\nu)}_\lambda$ is obtained from its expansion along the $j$-th column for $1\leq j\leq n$.

The above computation shows that   $h^{(\nu )}$ \eqref{eigenfunctions:g=1} solves the eigenvalue equation for $H\vert_{\mathrm{g}\to 1}$  with eigenvalue $E_\nu$ \eqref{eigenvalues:g=1}.
We will now check that  these solutions produce an orthogonal eigenbasis with the aid of the Cauchy-Binet formula (viewed as a discrete counterpart of a determinantal integration  formula due to
Andr\'eief \cite{for:meet}).
Indeed, for any $\nu,\tilde{\nu}\in\Lambda^{(n,m)}$ one has that
\begin{align*}
& \sum_{\lambda\in\Lambda^{(n,m)}}
\textsc{c}_\lambda^2 \textsc{s}^{(\nu)}_\lambda \textsc{s}^{(\tilde\nu)}_\lambda \Delta_\lambda 
= \\
&\frac{1}{  \textsc{a}^{(\nu)}_0 \textsc{a}^{(\tilde\nu)}_0 \prod_{1\leq j\leq n} \Delta^{(1,n+m-1)}_{n-j} \textsc{c}^2_{n-j} }
\sum_{\lambda\in\Lambda^{(n,m)}}       \textsc{a}^{(\nu)}_\lambda \textsc{a}^{(\tilde\nu)}_\lambda  \prod_{1\leq j\leq n}\Delta^{(1,n+m-1)}_{n-j+\lambda_i} \textsc{c}^2_{n-j+\lambda_j} 
\end{align*}
(cf. Eqs. \eqref{weights:g=1}, \eqref{ck} and \eqref{C-lambda:g=1}).
By writing  $n-j+\lambda_j =\mu_j$ we cast the sum on the second line in the form
\begin{align}\label{sum}
\sum_{m+n>\mu_1>\dots>\mu_n\geq 0} & \Bigl( \det [P_{\mu_i}(E_{n-j+\nu_j}) ( \Delta^{(1,n+m-1)}_{\mu_i} )^{1/2}  \textsc{c}_{\mu_i}  ]_{1\leq i,j\leq n}\\
&\times \det [P_{\mu_i}(E_{n-j+\tilde{\nu}_j}) ( \Delta^{(1,n+m-1)}_{\mu_i} )^{1/2}  \textsc{c}_{\mu_i}  ]_{1\leq i,j\leq n} \Bigr) .
\nonumber
\end{align}
Upon comparing with the Cauchy-Binet formula for the expansion of the determinant of the product of the $n\times (n+m)$ matrix 
\begin{equation*}
[P_{k}(E_{n-j+\nu_j}) ( \Delta^{(1,n+m-1)}_{k} )^{1/2}  \textsc{c}_{k}  ]_{\substack { 1\leq j\leq n \\ 0\leq k < n+m}}
\end{equation*}
 and the
$ (n+m)\times n $ matrix 
\begin{equation*}
 [P_{k}(E_{n-j+\tilde{\nu}_j}) ( \Delta^{(1,n+m-1)}_{k} )^{1/2}  \textsc{c}_{k}  ]_{\substack { 0\leq k < n+m\\1\leq j\leq n }} ,
 \end{equation*}
it is seen that the sum in Eq. \eqref{sum} is equal to following determinant:
\begin{align*}
\det\bigg[ \sum_{0\leq k<n+m}P_k(E_{n-i+\nu_i})P_k(E_{n-j+\tilde{\nu}_j})\Delta^{(1,n+m-1)}_k\textsc{c}_k^2 \bigg]_{1\leq i,j\leq n} &\\
\stackrel{\eqref{orthogonality2}}{=}\begin{cases}
\prod_{1\leq j\leq n}\textsc{n}_{n-j+\nu_j}&\text{if}\ \tilde{\nu}=\nu ,\\
0&\text{otherwise}. 
\end{cases} &
\end{align*}
Notice at this point that if
$l=\max\{ 1\leq l\leq n \mid \nu_j = \tilde{\nu}_j\, \text{for all}\, 1\leq j\leq l \}$, then the orthogonality
\eqref{orthogonality2} implies that
the latter determinant factorizes as a product of 
$\prod_{1\leq j\leq l}\textsc{n}_{n-j+\nu_j}$ (stemming from the upper-left $l\times l$ diagonal principal minor) and a factor given by the bottom-right $(n-l)\times (n-l)$ principal minor. Unless $l=n$, the latter minor vanishes because either its first column (if $\nu_{l+1} < \tilde{\nu}_{l+1}$) or its first row (if $\nu_{l+1} > \tilde{\nu}_{l+1}$) contains only zeros.
This completes the proof of the asserted orthogonality and normalization of our eigenfunctions.
\end{proof}

\begin{note}  Macdonald's ninth variation of the Schur polynomials in  \cite{mac:schur} associates a generalized Schur polynomial
to any basis of monic univariate polynomials of degree $k=0,1,2,\ldots $ (cf.  also \cite{naketal:tableau,ser-ves:jacobi} and references therein).
From this perspective, our Schur polynomials $\textsc{s}^{(\nu)}_\lambda$ \eqref{schur} amount to the particular case of generalized Schur polynomials associated with the elliptic Racah polynomials from \cite{die-gor:racah}. Since the elliptic Racah polynomial $P_k (E)$ \eqref{E-Racah} is monic and of degree $k$, the determinant $\textsc{a}^{(\nu)}_0$ \eqref{schur} can be brought
to the Vandermonde form by unitriangular row-operations. The denominator of  $\textsc{s}^{(\nu)}_\lambda$ thus factorizes in terms of the roots of the top-degree elliptic Racah polynomial $P_{n+m}(E)$ as follows:
\begin{equation}
\textsc{a}^{(\nu)}_0= \prod_{1\leq j<k\leq n}(E_{n-j+\nu_j}-E_{n-k+\nu_k})  .
\end{equation}
The monotonicity of the roots $E_0,\ldots ,E_{n+m-1}$ \eqref{E-racah:roots} guarantees that the denominator in question does not vanish.
\end{note}

\section*{Acknowledgements}
The work of JFvD was supported in part by the {\em Fondo Nacional de Desarrollo
Cient\'{\i}fico y Tecnol\'ogico (FONDECYT)} Grant \# 1210015. TG was supported in part by the NKFIH Grant K134946.

\bigskip\noindent
\parbox{.135\textwidth}{\begin{tikzpicture}[scale=.03]
\fill[fill={rgb,255:red,0;green,51;blue,153}] (-27,-18) rectangle (27,18);  
\pgfmathsetmacro\inr{tan(36)/cos(18)}
\foreach \i in {0,1,...,11} {
\begin{scope}[shift={(30*\i:12)}]
\fill[fill={rgb,255:red,255;green,204;blue,0}] (90:2)
\foreach \x in {0,1,...,4} { -- (90+72*\x:2) -- (126+72*\x:\inr) };
\end{scope}}
\end{tikzpicture}} \parbox{.85\textwidth}{This project has received funding from the European Union's Horizon 2020 research and innovation programme under the Marie Sk{\l}odowska-Curie grant agreement No 795471.}

\bibliographystyle{amsplain}

\end{document}